\newtheorem{theorem}{Theorem}[section]
\newtheorem{defn}[theorem]{Definition}
\newtheorem{lemma}[theorem]{Lemma}
\newtheorem{coro}[theorem]{Corollary}
\newtheorem{prop-def}{Proposition-Definition}[section]
\newcommand{\nc}{\newcommand}
\newcommand{\delete}[1]{}
\nc{\mlabel}[1]{\label{#1}}  % Use this to suppress names
\nc{\mcite}[1]{\cite{#1}}  % Use this to suppress names
\nc{\mref}[1]{\ref{#1}}  % Use this to suppress names
\nc{\mbibitem}[1]{\bibitem{#1}} % Use this to show number
\nc{\mlabel}[1]{\label{#1}  % Use the next two lines to show names
{\hfill \hspace{1cm}{\bf{{\ }\hfill(#1)}}}}
\nc{\mcite}[1]{\cite{#1}{{\bf{{\ }(#1)}}}}  % Use this lines to show names
\nc{\mref}[1]{\ref{#1}{{\bf{{\ }(#1)}}}}  % Use this lines to show names
\nc{\mbibitem}[1]{\bibitem[\bf #1]{#1}} % Use this to show name
\nc{\bfk}{\mathbf{k}}
\nc{\Der}{\mathrm{Der}}
\nc{\Ker}{\mathrm{Ker}}
\begin{document}

\title{Homogeneous Rota-Baxter operators on $A_{\omega}$ (II) }

\author{RuiPu  Bai}
\address{College of Mathematics and Information  Science,
Hebei University, Baoding 071002, China} \email{bairuipu@hbu.edu.cn}

\author{Yinghua Zhang}
\address{College of Mathematics and Information  Science,
Hebei University, Baoding 071002, China} \email{zhangyinghua1234@163.com}

\date{\today}

\begin{abstract} In this paper we study  $k$-order homogeneous  Rota-Baxter operators with weight $1$ on the  simple $3$-Lie algebra $A_{\omega}$ (over a field of characteristic zero),  which is realized by an associative commutative algebra
$A$ and  a derivation $\Delta$ and an involution $\omega$ (Lemma
\mref{lem:rbd3}). A $k$-order homogeneous  Rota-Baxter operator on
$A_{\omega}$ is a linear map $R$  satisfying
$R(L_m)=f(m+k)L_{m+k}$ for all generators $\{ L_m~ |~ m\in \mathbb Z \}$ of $A_{\omega}$ and  a map $f :
\mathbb Z \rightarrow\mathbb F$, where  $k\in \mathbb Z$. We prove that $R$ is a $k$-order  homogeneous
Rota-Baxter operator on $A_{\omega}$ of weight $1$ with $k\neq 0$ if and only if $R=0$ (see Theorems \ref{thm:3.1}), and $R$ is a $0$-order  homogeneous
Rota-Baxter operator on $A_{\omega}$ of weight $1$ if and only if $R$ is  one
of the forty possibilities which are described
in  Theorems \ref{thm:fin},  \ref{thm:Rm0},  \ref{thm:Rm1},  \ref{thm:R01},   \ref{thm:f(0)a},   \ref{thm:f(0)a1} and  \ref{thm:f(0)a3}.

\end{abstract}

\subjclass[2010]{17B05, 17D99.}

\keywords{ $3$-Lie algebra, homogeneous  Rota-Baxter operator, Rota-Baxter $3$-algebra.}

\maketitle

%\tableofcontents

%\baselineskip=18pt

\allowdisplaybreaks

\section{Introduction}

Rota-Baxter operators  have been closely related to many fields in mathematics and mathematical physics.
They have played an important role in the Hopf algebra approach of renormalization of perturbative quantum field theory \mcite{ BGN, BGN2, EGK,EGM}, as well as in the application of the renormalization method in solving divergent problems in number theory~\mcite{GZ,MP}, they are also important topics
in many fields such as symplectic geometry, integrable systems, quantum groups and
quantum field theory~\cite{Ag2,BBGN,Ca,EGK,Guw,Gub,GK1, GK3, HuiB, GZ,Ro1,Ro2}.

Authors in  ~\cite{BGL11} investigated the Rota-Baxter operators
on $n$-Lie algebras  \cite{F} and  studied  the structure of Rota-Baxter
$3$-Lie algebras,  and they also provided a method to realize Rota-Baxter
$3$-Lie algebras from Rota-Baxter $3$-Lie algebras, Rota-Baxter
Lie algebras, Rota-Baxter pre-Lie algebras and Rota-Baxter
commutative associative algebras and derivations. In paper ~\cite{BZ}, authors discussed a class of Rota-Baxter operators of weight zero  on an
infinite dimensional simple 3-Lie algebra $A_{\omega}$ over a field $\mathbb F$ of characteristic zero, which is  the $0$-order homogeneous Rota-Baxter operators of weight zero.  A  homogeneous  Rota-Baxter operator on
$A_{\omega}$ is a linear map $R$  satisfying
$R(L_m)=f(m)L_{m}$ for all generators $\{ L_m~ |~ m\in \mathbb Z \}$ of $A_{\omega}$ and  a map $f :
\mathbb Z \rightarrow\mathbb F$.
It is proved that $R$ is a homogeneous
Rota-Baxter operator on $A_{\omega}$ if and only if $R$ is one of the five possibilities $R_{0_1}, R_{0_2}, R_{0_3}, R_{0_4}$ and
$R_{0_5}$. By means of homogeneous Rota-Baxter
operators, new 3-Lie algebras $(A,[,,]_i)$  for $1\leq i \leq 5$ are constructed,  and $R_{0_i}$ is also  an
homogeneous Rota-Baxter operator on the 3-Lie algebra $(A,[,,]_i$), for $1\leq i\leq 5$, respectively.

In this paper we investigate  $k$-order homogeneous Rota-Baxter
operators of weight $1$ on the simple $3$-Lie $\mathbb F$-algebra $A_{\omega}$, where $\mathbb F$ is a field of characteristic zero. Throughout this
paper, by an algebra we mean an $\mathbb F$-algebra and we denote by $\mathbb Z$ the set of integers.

\section{preliminary}
\label{sec:rbn}

We recall that a {\bf 3-Lie algebra} over a field $\mathbb F$ is an $\mathbb F$-vector space $A$  endowed with a ternary multi-linear skew-symmetric operation
satisfying
for all $x_1,x_2,x_3, y_2, y_3\in A$.
\begin{equation}\label{eq:2.1}
[[x_1,x_2,x_3],y_2,y_3]=[[x_1,y_2,y_3],x_2,x_3] +[[x_2,y_2,y_3],x_3,x_1]+[[x_3,y_2,y_3],x_1,x_2].
\end{equation}

\begin{defn}
\mlabel{de:nlie}  Let $\lambda\in\mathbb F$ be fixed.
A  {\bf Rota-Baxter $3$-algebra} is a $3$-algebra $(A,\langle , , \rangle)$ with a linear map $R: A\to A$ such that
\begin{eqnarray}\label{eq:2.3}
\langle R(x_1), R(x_2), R(x_3)\rangle
&=& R\Big(
\langle R(x_1), R(x_2),x_3\rangle +\langle R(x_1),x_2, R(x_3)\rangle +\langle x_1, R(x_2), R(x_3)\rangle \notag \\
&&
+\lambda \langle R(x_1),x_2,x_3\rangle
+\lambda \langle x_1, R(x_2),x_3\rangle
+\lambda \langle x_1,x_2, R(x_3)\rangle
\mlabel{eq:rb3de}\\
&&
+\lambda^2 \langle x_1,x_2,x_3\rangle\Big).
\notag
\end{eqnarray}
\label{de:2.1}
\end{defn}

\begin{lemma}

Let $(A, \langle\ , , \rangle)$ be a  $3$-algebra over $\mathbb F$, $R: A\rightarrow A$ be a linear map and $\lambda\in \mathbb F$, $\lambda\neq 0$.
Then $(A, \langle\ , , \rangle, R)$ is a Rota-Baxter $3$-algebra of weight $ \lambda$  if and only if   $(A, \langle\ , , \rangle, \frac{1}{\lambda}R)$ is a Rota-Baxter $3$-algebra of weight $1$.
\label{lem:rbd2}
\end{lemma}

\begin{proof} Apply Eq
(\ref{eq:2.3}).
\end{proof}

\begin{lemma}\cite{BW22} Let $A$ be an $\mathbb F$-vector space with a basis $\{ L_n~|~ n\in \mathbb Z \}$. Then $A$ is a simple $3$-Lie algebra
 in the multiplication
\begin{equation}
{[}L_{l},L_{m},L_{n}]=\begin{vmatrix}
(-1)^l&(-1)^m&
   (-1)^n \\
1&1&1  \\
l& m&
   n \\
\end{vmatrix}L_{l+m+n-1}, ~~\mbox{for all } ~~l, m, n\in \mathbb Z.
\mlabel{eq:defthlmn}
\end{equation}
\mlabel{lem:rbd3}
\end{lemma}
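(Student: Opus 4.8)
The plan is to prove the statement in three stages: skew-symmetry, the fundamental identity \eqref{eq:2.1}, and simplicity. Throughout write $\varphi(l,m,n)$ for the structure constant
\[
\varphi(l,m,n)=\begin{vmatrix} (-1)^l & (-1)^m & (-1)^n \\ 1 & 1 & 1 \\ l & m & n \end{vmatrix}=(-1)^l(n-m)+(-1)^m(l-n)+(-1)^n(m-l),
\]
so that \eqref{eq:defthlmn} reads $[L_l,L_m,L_n]=\varphi(l,m,n)\,L_{l+m+n-1}$. The columns of this determinant are indexed by $l,m,n$, so interchanging two arguments interchanges two columns and reverses the sign; hence $\varphi$, and therefore the bracket, is totally skew-symmetric, and $\varphi(l,m,n)=0$ whenever two arguments coincide.

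Next I would exhibit the promised realization and use it to settle \eqref{eq:2.1}. Take $A=\mathbb{F}[t,t^{-1}]$ with $L_n=t^n$, let $\omega$ be the involutive automorphism determined by $\omega(t)=-t$ (so $\omega(t^n)=(-1)^n t^n$), and let $\Delta=\tfrac{d}{dt}$ be the usual derivation (so $\Delta(t^n)=n\,t^{n-1}$). A direct expansion shows that the bracket \eqref{eq:defthlmn} agrees on the basis with the determinant construction
\[
[x,y,z]=\begin{vmatrix} \omega(x) & \omega(y) & \omega(z) \\ x & y & z \\ \Delta(x) & \Delta(y) & \Delta(z) \end{vmatrix},
\]
since for $x=t^l,y=t^m,z=t^n$ the expansion collapses to $\varphi(l,m,n)\,t^{l+m+n-1}$. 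Here $\Delta$ is a derivation, $\omega$ an involutive automorphism, and $\omega\Delta=-\Delta\omega$ (both sides send $t^n$ to $(-1)^{n-1}n\,t^{n-1}$). Evaluating \eqref{eq:2.1} on basis elements and cancelling the common factor $L_{p+q+r+s+t-2}$, the identity reduces to the single scalar relation
\[
\varphi(p,q,r)\,\varphi(p{+}q{+}r{-}1,s,t)=\varphi(p,s,t)\,\varphi(p{+}s{+}t{-}1,q,r)+\varphi(q,s,t)\,\varphi(q{+}s{+}t{-}1,r,p)+\varphi(r,s,t)\,\varphi(r{+}s{+}t{-}1,p,q).
\]
I expect this to be the principal obstacle. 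It can be verified conceptually, by checking that the determinant construction above yields a $3$-Lie bracket whenever the commutative product, the Leibniz rule, and the anticommutation $\omega\Delta=-\Delta\omega$ hold; or, if a self-contained route is preferred, by substituting the explicit form of $\varphi$ and comparing the coefficients of the sign-monomials $(-1)^{\varepsilon_1 p+\cdots+\varepsilon_5 t}$ on both sides, a finite if tedious bookkeeping.

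Finally I would prove simplicity by a minimal-support argument, using the weighted index shift $\operatorname{ad}(L_a,L_b)\colon L_i\mapsto \varphi(a,b,i)\,L_{a+b+i-1}$. The key elementary fact is that $\varphi(a,b,i)\neq0$ precisely when $a,b,i$ are pairwise distinct and not all of the same parity (indeed $\varphi$ vanishes iff the columns are dependent, which for entries $((-1)^x,1,x)$ happens exactly in those degenerate cases). Let $I\neq 0$ be an ideal and choose $0\neq u=\sum_i c_iL_i\in I$ of minimal support size $N$. If $N\geq 2$, pick $a=i_1$ in the support and $b$ outside the support with parity opposite to $i_1$; then $\varphi(i_1,b,i_1)=0$ annihilates the $i_1$-term while $\varphi(i_1,b,i)\neq0$ for every other $i$ in the support, so $[L_{i_1},L_b,u]\in I$ is nonzero with support of size $N-1$, contradicting minimality. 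Hence $N=1$ and some $L_j\in I$. For an arbitrary target $n$, choosing $a,b$ pairwise distinct from each other and from $j$, not all of the same parity, with $a+b=n-j+1$, makes $\varphi(a,b,j)\neq0$, so $[L_a,L_b,L_j]$ is a nonzero multiple of $L_n$; thus $L_n\in I$ for every $n$ and $I=A$. Therefore $A$ admits no proper nonzero ideal and is simple.
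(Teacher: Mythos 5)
The paper offers no proof of this lemma at all --- it is quoted from \cite{BW22} --- so there is no in-paper argument to measure you against; your proposal has to stand on its own. In outline it is sound, and two of its three parts are complete: skew-symmetry is indeed immediate from the column structure of the determinant, and your minimal-support proof of simplicity is correct. The vanishing criterion for $\varphi$ that you use (arguments pairwise distinct and not all of one parity) is exactly the paper's Lemma \ref{lem:det}; choosing $b$ outside the support with parity opposite to $i_1$ kills only the $i_1$-term while preserving all others (and the images $L_{i_1+b+i-1}$ stay distinct, so nothing cancels); and the final step generating every $L_n$ from a single $L_j$ works because only finitely many choices of $a$ with $a+b=n-j+1$ are excluded.

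The genuine gap is the fundamental identity \eqref{eq:2.1}, which is the entire content of the claim that \eqref{eq:defthlmn} defines a $3$-Lie algebra, and which you explicitly leave unverified. Your reduction to the scalar identity in $\varphi$ is correctly set up, but neither of your two proposed routes is executed. The ``conceptual'' route is close to circular as stated: the assertion that the determinant bracket built from $(\omega,\mathrm{id},\Delta)$ on a commutative associative algebra satisfies the $3$-Lie identity whenever $\Delta$ is a derivation, $\omega$ an involutive automorphism and $\omega\Delta=-\Delta\omega$ is itself the construction theorem of \cite{BW22} and \cite{BGL11} that this lemma instantiates, so invoking it unproved merely relocates the difficulty. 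The direct route does work --- writing $\epsilon_x=(-1)^x$ with $\epsilon_x^2=1$ and $(-1)^{p+q+r-1}=-\epsilon_p\epsilon_q\epsilon_r$ turns both sides into polynomials in $p,q,r,s,t,\epsilon_p,\dots,\epsilon_t$ whose coefficients can be compared --- but until that computation (or a proof of the general construction theorem) is actually written out, the proof is missing its central step.
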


\noindent{\bf Notation}. In the following, the 3-Lie algebra $A$ in Lemma \mref{lem:rbd3} is denoted by ${\bf A_{\omega}}$, and we
set

\begin{equation}\label{eq:det}
D(l, m, n) := \begin{vmatrix}
(-1)^l&(-1)^m&
   (-1)^n \\
1&1&1  \\
l& m&
   n \\
\end{vmatrix}.
\end{equation}

\begin{lemma}\cite{BZ}\quad
 $D(l, m, n)=0$  if and only if  for all $l, m, n, k, s, t\in \mathbb Z,$

 \vspace{2mm} $(l-m)(l-n)(m-n)=0$, or $l=2k+1, m=2s+1, n=2t+1$,  or $l=2k, m=2s, n=2t$.
\label{lem:det}
 \end{lemma}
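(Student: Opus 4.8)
The plan is to reduce the statement to an elementary parity analysis of the expanded determinant. First I would expand $D(l,m,n)$ along its first row to obtain the closed form
\[
D(l,m,n) = (-1)^l(n-m) - (-1)^m(n-l) + (-1)^n(m-l),
\]
and then regroup it by the variables to get
\[
D(l,m,n) = l\big((-1)^m-(-1)^n\big) + m\big((-1)^n-(-1)^l\big) + n\big((-1)^l-(-1)^m\big).
\]
This second form is the one I would work from, since each coefficient is a difference of signs and therefore vanishes exactly when the corresponding two parities agree.

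For the ``if'' direction there are two easy cases. If two of $l,m,n$ are equal, then two columns of the defining matrix in Eq.~\eqref{eq:det} coincide, so $D=0$ at once; this disposes of the factor $(l-m)(l-n)(m-n)=0$. If instead $l,m,n$ all share the same parity, then $(-1)^l=(-1)^m=(-1)^n$, every coefficient in the regrouped expression vanishes, and again $D=0$; this covers both the all-odd case $l=2k+1,m=2s+1,n=2t+1$ and the all-even case $l=2k,m=2s,n=2t$.

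For the ``only if'' direction I would argue by contraposition: assume $l,m,n$ are pairwise distinct and their parities are \emph{not} all equal, and deduce $D\neq 0$. Mixed parity means exactly one of the three entries carries the minority parity while the remaining two share a common parity. Running through the sign patterns, in each one two of the three sign-differences vanish while the survivor equals $\pm 2$, so $D$ collapses to $\pm 2(x-y)$, where $x,y$ are precisely the two same-parity variables. Since $x\neq y$ by the distinctness hypothesis, this quantity is nonzero, whence $D\neq 0$, as required.

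The main difficulty here is bookkeeping rather than conceptual depth: because $D$ is antisymmetric while the three variables enter the row expansion in formally different positions, the mixed-parity step splits into several sign patterns that must each be verified. The observation that tames this is that in every mixed case $D$ reduces to the single shape $\pm 2(x-y)$ with $x,y$ the equal-parity entries, so all the patterns funnel into the one conclusion ``$D=0 \iff x=y$,'' and the lemma follows.
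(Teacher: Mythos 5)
Your proof is correct. Note that the paper itself offers no argument for this lemma --- it is simply quoted from \cite{BZ} --- so there is no internal proof to compare against; your elementary verification fills that gap. The key identity
\[
D(l,m,n)=l\bigl((-1)^m-(-1)^n\bigr)+m\bigl((-1)^n-(-1)^l\bigr)+n\bigl((-1)^l-(-1)^m\bigr)
\]
is the right normal form: it makes the all-equal-parity case immediate, and in the mixed-parity case it collapses to $\pm 2(x-y)$ with $x,y$ the two same-parity entries, which together with the repeated-column case gives exactly the stated characterization. The only cosmetic remark is that the lemma's phrasing ``for all $l,m,n,k,s,t$'' is really an existential statement about $k,s,t$ (i.e.\ ``$l,m,n$ all odd or all even''), which is how you have correctly read it.
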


\section{Homogeneous Rota-Baxter operators of weight $1$ on $3$-Lie algebra $A_{\omega}$}
\mlabel{sec:rbl3rbl}

By Definition~\ref{de:2.1}, if $(A, [, , ], R)$ is a Rota-Baxter $3$-Lie
algebra of weight $1$, then the $\mathbb F$-linear map $R: A\rightarrow
A$ satisfies, for all $x_1, x_2, x_3\in A$,
\begin{eqnarray}\label{eq:3.1}
[ R(x_1), R(x_2), R(x_3)]
&=& R\big(
[R(x_1), R(x_2),x_3] +[ R(x_1),x_2, R(x_3)] + [ x_1, R(x_2), R(x_3)] \notag \\
&&
+ [ R(x_1),x_2,x_3]
+ [ x_1, R(x_2),x_3 ]
+ [ x_1,x_2, R(x_3)]\\
&&
+[ x_1,x_2,x_3]\big).
\notag
\end{eqnarray}

\begin{defn}
\mlabel{de:krotabaxter}

Let $R$ be a   Rota-Baxter operator on the $3$-Lie algebra $A_{\omega}$. If there exist a map
$f: \mathbb Z \rightarrow \mathbb F$, and $k\in \mathbb Z$ such that
\begin{equation}\label{eq:3.2}
R(L_{m})=f(m+k)L_{m+k},~~  \forall m\in \mathbb  Z,
\end{equation}
then $R$ is called {\bf a $k$-order  homogeneous Rota-Baxter operator}, which is denoted by $R_k$.

\end{defn}

\subsection{$k$-order homogeneous Rota-Baxter operators with $k\neq 0$}

From Eq \eqref{eq:3.2}, we know that for all ~$x,y,z\in A_\omega$,

\vspace{2mm} $
[R_{k}(L_{l}),R_{k}(L_{m}),R_{k}(L_{n})]=[f(l+k)L_{l+k},f(m+k)L_{m+k},f(n+k)L_{n+k}]
$
\\
$
=f(l+k)f(m+k)f(n+k)D(l+k, m+k, n+k)L_{l+m+n+3k-1},
$

\vspace{2mm} $
R_{k}\Big([L_{l},R_{k}(L_{m}),R_{k}(L_{n})]+[R_{k}(L_{l}),L_{m},R_{k}(L_{n})]+[R_{k}(L_{l}),R_{k}(L_{m}),L_{n}]
$
\\
$
+[R_{k}(L_{l}),L_{m},L_{n}]+[L_{l},R_{k}(L_{m}),L_{n}]+[L_{l},L_{m},R_{k}(L_{n})]+[L_{l},L_{m},L_{n}]\Big)
$
\\
$
=R_{k}([L_{l},f(m+k)L_{m+k},f(n+k)L_{n+k}]+[f(l+k)L_{l+k},L_{m},f(n+k)L_{n+k}]
$
\\
$
+[f(l+k)L_{l+k},f(m+k)L_{m+k},L_{n}])+[f(l+k)L_{l+k},L_{m},L_{n}]
$
\\
$
+[L_{l},f(m+k)L_{m+k},L_{n}]+[L_{l},L_{m},f(n+k)L_{n+k}]+[L_{l},L_{m},L_{n}])
$
\\
$
=f(m+k)f(n+k)f(l+m+n+3k-1)D(l, m+k, n+k)L_{l+m+n+3k-1}
$
\\
$
+f(l+k)f(n+k)f(l+m+n+3k-1)D(l+k, m, n+k)L_{l+m+n+3k-1}
$
\\
$
+f(l+k)f(m+k)f(l+m+n+3k-1)D(l+k, m+k, n)L_{l+m+n+3k-1}
$
\\
$
+f(l+k)f(l+m+n+2k-1)D(l+k, m, n)L_{l+m+n+2k-1}
$
\\
$
+f(m+k)f(l+m+n+2k-1)D(l, m+k, n)L_{l+m+n+2k-1}
$
\\
$
+f(n+k)f(l+m+n+2k-1)D(l, m, n+k)L_{l+m+n+2k-1}
$
\\
$
+f(l+m+n+k-1)D(l, m, n)L_{l+m+n+k-1}.
$

\vspace{2mm} Thanks to Eq \eqref{eq:3.1},

\vspace{2mm} $
[f(l+k)L_{l+k},f(m+k)L_{m+k},f(n+k)L_{n+k}]
$
\\
$=R_{k}([L_{l},f(m+k)L_{m+k},f(n+k)L_{n+k}]+[f(l+k)L_{l+k},L_{m},f(n+k)L_{n+k}]
$
\\
$+[f(l+k)L_{l+k},f(m+k)L_{m+k},L_{n}]).$

\vspace{2mm} Therefore,  if $k\neq 0$, then   for all~$l, m, n\in \mathbb  Z$, $R_{k}([L_{l},L_{m},L_{n}])=0$.
Thanks to  $A_\omega=[A_\omega, A_\omega, A_\omega]$,  $R_k(A_\omega)=0$.

This shows the following result.

\vspace{2mm}\begin{theorem} A linear map $R_{k}$ defined by Eq \eqref{eq:3.2} is a $k$-order homogeneous Rota-Baxter operator of weight $1$  on $A_\omega$
if and only if $R_k=0$.
\label{thm:3.1}
\end{theorem}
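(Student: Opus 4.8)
The plan is to show that the defining Rota-Baxter identity Eq~\eqref{eq:3.1} forces $R_k$ to annihilate every bracket $[L_l,L_m,L_n]$ when $k\neq 0$, and then to invoke simplicity of $A_\omega$ to conclude $R_k=0$. The strategy is purely computational: I would substitute the three generators $L_l, L_m, L_n$ into both sides of Eq~\eqref{eq:3.1} with $R_k$ as in Eq~\eqref{eq:3.2}, expand everything using the structure constants $D(\cdot,\cdot,\cdot)$ of Lemma~\mref{lem:rbd3}, and compare the results grouped by the degree of the resulting generator $L_{l+m+n+jk-1}$.

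The key observation driving the proof is a \emph{degree-separation} argument. On the left-hand side, $[R_k(L_l),R_k(L_m),R_k(L_n)]$ lands entirely in the single homogeneous component $L_{l+m+n+3k-1}$. On the right-hand side, after applying $R_k$, the seven terms distribute across four distinct homogeneous components, namely those indexed by $l+m+n+3k-1$, $l+m+n+2k-1$, $l+m+n+k-1$, and (from the bare bracket $[L_l,L_m,L_n]$ composed with $R_k$) again $l+m+n+k-1$. First I would record the three displayed expansions exactly as written above. Then, since $k\neq 0$, the four values $3k-1, 2k-1, k-1$ (with the index shift) are pairwise distinct, so the basis vectors $L_{l+m+n+3k-1}, L_{l+m+n+2k-1}, L_{l+m+n+k-1}$ are linearly independent. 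Matching coefficients of $L_{l+m+n+3k-1}$ on both sides yields precisely the partial Rota-Baxter identity isolated in the excerpt, while the remaining terms — those of lower degree $L_{l+m+n+2k-1}$ and $L_{l+m+n+k-1}$ — must vanish independently because they have no counterpart on the left.

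Concretely, after the cancellation coming from matching the top-degree $L_{l+m+n+3k-1}$ component, what survives is
\begin{align*}
0 &= f(l+k)f(l+m+n+2k-1)D(l+k,m,n)L_{l+m+n+2k-1}\\
&\quad + f(m+k)f(l+m+n+2k-1)D(l,m+k,n)L_{l+m+n+2k-1}\\
&\quad + f(n+k)f(l+m+n+2k-1)D(l,m,n+k)L_{l+m+n+2k-1}\\
&\quad + f(l+m+n+k-1)D(l,m,n)L_{l+m+n+k-1},
\end{align*}
which, upon applying $R_k$ to this relation read in the form coming from the third displayed block, shows that $R_k([L_l,L_m,L_n]) = f(l+m+n+k-1)D(l,m,n)L_{l+m+n+k-1}$ is itself forced to be zero for all $l,m,n$. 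Finally I would use $A_\omega=[A_\omega,A_\omega,A_\omega]$, which holds by simplicity (Lemma~\mref{lem:rbd3}), to spread the vanishing of $R_k$ on all brackets to the whole algebra, giving $R_k(A_\omega)=0$. The converse, that $R_k=0$ trivially satisfies Eq~\eqref{eq:3.1}, is immediate.

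The main obstacle is the bookkeeping in the degree-separation step: one must be careful that the index shifts $3k-1,2k-1,k-1$ genuinely produce distinct homogeneous components for every nonzero $k$ and every choice of $l,m,n$, so that coefficient comparison is legitimate. This is where the hypothesis $k\neq 0$ is used in an essential way — when $k=0$ all three shifts collapse to $-1$ and the components coincide, destroying the separation and allowing the nontrivial solutions classified in the later theorems. I expect no genuine difficulty beyond organizing the seven-term expansion and confirming the surviving lower-degree coefficients force $R_k$ to kill every bracket.
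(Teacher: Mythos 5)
Your proposal is correct and follows essentially the same route as the paper: expand both sides of Eq~\eqref{eq:3.1} over the basis, separate the components $L_{l+m+n+3k-1}$, $L_{l+m+n+2k-1}$, $L_{l+m+n+k-1}$ (distinct precisely because $k\neq 0$), conclude that the bare-bracket term $R_k([L_l,L_m,L_n])$ must vanish, and finish with $A_\omega=[A_\omega,A_\omega,A_\omega]$. The only blemishes are cosmetic (the ``four components'' miscount and the slightly garbled ``applying $R_k$ to this relation'' phrase), neither of which affects the argument.
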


\subsection{$0$-order homogeneous Rota-Baxter operators of weight $1$}

In the following we discuss the $0$-order homogeneous Rota-Baxter operators of weight $1$ on $A_{\omega}$. Then   Eq \eqref{eq:3.2} is reduced to

\begin{equation}
R(L_{m})=f(m)L_{m}, \forall m\in \mathbb Z.
\label{eq:R0}
\end {equation}

{\it For convenience, throughout this paper we suppose that $R$ is a linear map on $A_{\omega}$ defined by Eq \eqref{eq:R0},  and
 $0$-order homogeneous Rota-Baxter operator  $R_0$ of weight $1$ on $A_{\omega}$ is simply denoted by $R$, and  is simply  called {\bf  a  homogeneous Rota-Baxter operator on $A_\omega$.}

 Denote

 $W_1=\{ 2m ~|~ m\in \mathbb Z, m\neq 0, f(2m)\neq 0 \}$, \quad $U_1=\{ 2m+1 |~ m\in \mathbb Z, m\neq 0, f(2m+1)\neq 0\}$,

  $W_2=\{ 2m ~ | ~m\in \mathbb  Z, m\neq 0,  f(2m)=0 ~ \}$, \hspace{2mm} $U_2=\{ 2m+1 | ~m\in \mathbb Z, m\neq 0,  f(2m+1)=0\}$.}

\begin{lemma}  The linear map $R$ is a  homogeneous Rota-Baxter operator on $A_\omega$
if and only if  the map $f: \mathbb Z\rightarrow \mathbb F $ in Eq \eqref{eq:R0} satisfies that for all $l, m, n\in \mathbb Z$,
\begin{equation}
\hspace{-2cm} f(2l+1)f(2m+1)f(2n)=(f(2l+1)f(2m+1)+f(2l+1)f(2n)
\label{eq:odd}
\end{equation}

\hspace{1cm}$+f(2m+1)f(2n)
+f(2l+1)+f(2m+1)+f(2n)+1)f(2l+2m+2n+1), l\neq m.$

\begin{equation}
\hspace{-1cm}f(2l+1)f(2m)f(2n)=(f(2l+1)f(2m)+f(2l+1)f(2n)+f(2m)f(2n)
\label{eq:even}
\end{equation}

\hspace{1cm}
$+f(2l+1)+f(2m)+f(2n)+1)f(2l+2m+2n), m\neq n.$
\label{lem:3.1}
\end{lemma}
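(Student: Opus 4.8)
The plan is to test the weight-$1$ Rota-Baxter condition \eqref{eq:3.1} on basis triples $L_l, L_m, L_n$. Since both sides of \eqref{eq:3.1} are trilinear in $(x_1,x_2,x_3)$ and $A_\omega$ is spanned by the $L_m$, verifying the identity on the basis is equivalent to verifying it in general. First I would substitute $R(L_p)=f(p)L_p$ into both sides. Because $R$ preserves the grading in the $0$-order case, every bracket appearing on either side is a multiple of $L_{l+m+n-1}$, and each such bracket carries the same determinantal factor $D(l,m,n)$ (the index-shifts that appeared in the $k\neq 0$ computation all collapse when $k=0$). Collecting coefficients, the identity \eqref{eq:3.1} for $(L_l,L_m,L_n)$ reduces to the single scalar equation
\[
f(l)f(m)f(n)\,D(l,m,n)=\Phi(l,m,n)\,f(l+m+n-1)\,D(l,m,n),
\]
where $\Phi(l,m,n)=f(l)f(m)+f(l)f(n)+f(m)f(n)+f(l)+f(m)+f(n)+1$. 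Both $f(l)f(m)f(n)$ and $\Phi(l,m,n)$ are symmetric in $l,m,n$, so the whole reduced equation is invariant under permuting $l,m,n$.

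The next step is to dispose of the triples with $D(l,m,n)=0$: there the displayed equation holds trivially, so $R$ is constrained only by triples with $D(l,m,n)\neq 0$. By Lemma \ref{lem:det}, $D(l,m,n)\neq 0$ exactly when $l,m,n$ are pairwise distinct and not all of the same parity. For three integers this leaves precisely two parity patterns (up to the symmetry just noted): two odd and one even, or two even and one odd. I would therefore split into these two cases and, using the permutation symmetry of the reduced equation, place the arguments in a canonical order in each case.

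In the first case write the two odd entries as $2l+1,2m+1$ and the even entry as $2n$; as $l,m,n$ range over $\mathbb Z$ these cover all such triples, the target index becomes $(2l+1)+(2m+1)+2n-1=2l+2m+2n+1$, and pairwise distinctness reduces to the single condition $l\neq m$ (the even entry is automatically distinct from the two odd ones). Cancelling the nonzero factor $D(l,m,n)$ turns the reduced equation into exactly \eqref{eq:odd}. In the second case write the odd entry as $2l+1$ and the even entries as $2m,2n$; the target index is $2l+2m+2n$, distinctness reduces to $m\neq n$, and cancelling $D(l,m,n)$ yields \eqref{eq:even}. Conversely, if $f$ satisfies \eqref{eq:odd} and \eqref{eq:even}, then the reduced scalar equation holds for every triple with $D(l,m,n)\neq 0$ and trivially for every triple with $D(l,m,n)=0$; by the trilinearity/basis argument this is equivalent to \eqref{eq:3.1}, so $R$ is a homogeneous Rota-Baxter operator.

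The computation is essentially bookkeeping and I do not expect a deep obstacle; the part requiring the most care is the parity-and-symmetry reduction, namely checking that the two canonical parity patterns genuinely exhaust all triples with $D(l,m,n)\neq 0$, that the symmetry of $\Phi$ legitimately lets us fix the order of the odd and even entries, and that the distinctness requirement in Lemma \ref{lem:det} translates precisely into the side conditions $l\neq m$ and $m\neq n$ appearing in \eqref{eq:odd} and \eqref{eq:even}, and not into any extra constraint on the remaining entry.
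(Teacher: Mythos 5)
Your proposal is correct and follows essentially the same route as the paper: substitute $R(L_p)=f(p)L_p$ into Eq.~\eqref{eq:3.1} on basis triples to obtain the single scalar identity $f(l)f(m)f(n)D(l,m,n)=\Phi(l,m,n)f(l+m+n-1)D(l,m,n)$, then apply Lemma~\ref{lem:det} to restrict to triples with $D(l,m,n)\neq 0$, which splits by parity into Eqs.~\eqref{eq:odd} and~\eqref{eq:even}. The paper's proof is merely a terser statement of this same reduction, and your extra care about the symmetry of $\Phi$ and the translation of pairwise distinctness into $l\neq m$ and $m\neq n$ is sound.
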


\begin{proof} By Eq \eqref{eq:3.1} and Eq \eqref{eq:R0}, $R$ is a  homogeneous Rota-Baxter operator on $A_\omega$
if and only if  $f$ satisfies that for all $l, m, n\in \mathbb Z$,

$
f(l)f(m)f(n)D(l, m, n)
=\Big(f(l)f(m)+f(l)f(n)+f(m)f(n)+f(l)+f(m)
$

\hspace{4cm}$
+f(n)+1\Big)
f(l+m+n-1)D(l, m, n).
$
\\
Follows from Lemma \ref{lem:det}, we obtain the result.
\end{proof}

 From Eq \eqref{eq:odd} and Eq \eqref{eq:even}, for $l=n=0$, and $m\in \mathbb Z, m\neq 0, 1$, we have
 $$f(0)f(m)f(1)=(f(0)f(1)+f(m)f(1)+f(0)f(m)+f(0)+f(1)+f(m)+1)f(m),$$
\\ so we get
\begin{equation}\label{eq:3.7}
(f(0)+f(1)+1)f(m)(f(m)+1)=0.
\end{equation}
Therefore, we will start the discussion according to  the value $f(0)+f(1)+1$.

\subsubsection{\bf Homogeneous Rota-Baxter operators with $f(0)+f(1)+1\neq 0$}

In this section we discuss homogeneous Rota-Baxter operators $R$ on $A_{\omega}$ defined by Eq \eqref{eq:R0} of the case  $f(0)+f(1)+1\neq 0$.

\begin{lemma} Let  $R$ be a   homogeneous Rota-Baxter operator on $A_{\omega}$. Then the map $f: \mathbb Z\rightarrow \mathbb F$ in Eq \eqref{eq:R0} satisfies equation
\begin{equation}\label{eq:fm}
f(m)(f(m)+1)=0, ~~ \forall m\in \mathbb Z, m\neq 0, 1.
\end{equation}

\label{lem:fm}
\end{lemma}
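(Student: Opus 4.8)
The plan is to read the conclusion off relation \eqref{eq:3.7} directly, so the genuine content is to justify that relation and then to cancel the scalar factor in front of it using the standing hypothesis. First I would recall how \eqref{eq:3.7} is produced. Starting from the single homogeneous constraint established in the proof of Lemma \ref{lem:3.1},
$$f(l)f(m)f(n)D(l,m,n)=\big(f(l)f(m)+f(l)f(n)+f(m)f(n)+f(l)+f(m)+f(n)+1\big)f(l+m+n-1)D(l,m,n),$$
I would specialize the unordered triple of indices to $\{0,1,m\}$ with $m\neq 0,1$. The crucial point, supplied by Lemma \ref{lem:det}, is that $D(0,1,m)\neq 0$: the entries $0,1,m$ are pairwise distinct and are not all of the same parity (since $0$ is even while $1$ is odd), so none of the degeneracy conditions of Lemma \ref{lem:det} can hold. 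Hence $D(0,1,m)$ may be cancelled from both sides, and because $0+1+m-1=m$ the surviving factor on the right collapses to $f(m)$, yielding
$$f(0)f(1)f(m)=\big(f(0)f(1)+f(0)f(m)+f(1)f(m)+f(0)+f(1)+f(m)+1\big)f(m).$$

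Next I would expand the right-hand side and subtract $f(0)f(1)f(m)$ from both sides. The cubic term $f(0)f(1)f(m)$ cancels, and the remaining terms regroup as $f(m)^2\,(f(0)+f(1)+1)+f(m)\,(f(0)+f(1)+1)$, that is,
$$\big(f(0)+f(1)+1\big)f(m)\big(f(m)+1\big)=0,$$
which is precisely \eqref{eq:3.7}, valid for every $m\neq 0,1$. Finally, invoking the hypothesis $f(0)+f(1)+1\neq 0$ of this subsection, I would divide through by this nonzero element of $\mathbb F$ to obtain $f(m)(f(m)+1)=0$ for all $m\in\mathbb Z$ with $m\neq 0,1$, which is the assertion of the lemma.

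I do not expect a genuine obstacle: the argument is a single specialization of the defining relation followed by a division in the field. The only step demanding care is confirming that $D(0,1,m)\neq 0$, so that cancelling $D$ makes the specialized relation a non-vacuous constraint on $f$; this is immediate from Lemma \ref{lem:det}. One could equivalently reach the same relation by setting $l=n=0$ separately in \eqref{eq:odd} (covering odd indices $2m+1\neq 1$) and in \eqref{eq:even} (covering even indices $2m\neq 0$), which together exhaust all $m\neq 0,1$; the direct specialization of the combined relation above is simply the cleanest route.
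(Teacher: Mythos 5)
Your argument is correct and is essentially the paper's own: the paper derives Eq \eqref{eq:3.7}, namely $(f(0)+f(1)+1)f(m)(f(m)+1)=0$ for $m\neq 0,1$, by setting $l=n=0$ in Eqs \eqref{eq:odd} and \eqref{eq:even} (equivalently, by your specialization of the combined relation with $D(0,1,m)\neq 0$), and then cancels the nonzero factor $f(0)+f(1)+1$. Your verification that $D(0,1,m)\neq 0$ and your explicit regrouping of terms are just a fuller write-up of the same computation.
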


\begin{proof} The result follows from $f(0)+f(1)+1\neq 0$, and Eq \eqref{eq:3.7}, directly.
\end{proof}

\begin{theorem} If at least one of the subsets $W_i, U_i$, $i=1, 2$ is finite. Then $R$ is a  homogeneous Rota-Baxter operator on $A_{\omega}$ if and only if the map $f: \mathbb Z\rightarrow \mathbb F$ in Eq \eqref{eq:R0} satisfies one  of the following,
for all $m, n\in \mathbb Z$,

 1) $f(m)=0$;

 2) $f(m)=-1$;

 3) $f(2m)=0, f(2m+1)=-1, m\neq 0$, and $f(0)(f(1)+1)=0$;

 4) $f(2m)=-1, f(2m+1)=0,  m\neq 0$ and  $f(1)(f(0)+1)=0$.

\label{thm:fin}
\end{theorem}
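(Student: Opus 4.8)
The plan is to first collapse $f$ to a two-valued function, then recast the Rota-Baxter condition as a combinatorial propagation rule which, together with the finiteness hypothesis, is rigid enough to force one of the four colourings. \emph{Step 1 (reduction to two values).} Since we are in the case $f(0)+f(1)+1\neq 0$, Lemma~\ref{lem:fm} gives $f(m)\in\{0,-1\}$ for every $m\neq 0,1$. Thus, off the two exceptional indices $0,1$, the datum of $f$ is a colouring of the even and odd indices by the two values $0,-1$, and the four conclusions 1)--4) are exactly the four colourings that are constant on each parity class. The content of the theorem is therefore that, under the finiteness hypothesis, no other colouring can occur.

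\emph{Step 2 (propagation rule).} Recalling the identity established in the proof of Lemma~\ref{lem:3.1}, valid (by Lemma~\ref{lem:det}) for every triple $l,m,n$ that is pairwise distinct and not all of one parity,
\[
f(l)f(m)f(n)=\bigl(f(l)f(m)+f(l)f(n)+f(m)f(n)+f(l)+f(m)+f(n)+1\bigr)f(l+m+n-1),
\]
I would rewrite the bracket as $(f(l)+1)(f(m)+1)(f(n)+1)-f(l)f(m)f(n)$, so that with $a=f(l)$, $b=f(m)$, $c=f(n)$, $d=f(l+m+n-1)$ it reads
\[
abc=\bigl((a+1)(b+1)(c+1)-abc\bigr)d.
\]
The key observation is that for $a,b,c\in\{0,-1\}$ this reduces to the trivial identity $0=0$ unless $a=b=c$, in which case it forces $d=a$. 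Hence: for any admissible triple of indices all different from $0,1$ on which $f$ is constant with value $c\in\{0,-1\}$, the output index $l+m+n-1$ carries the same value $c$. I would also record the symmetry $f\mapsto -1-f$, which preserves the defining identity, the hypothesis $f(0)+f(1)+1\neq 0$, and the finiteness hypothesis (it interchanges $W_1\leftrightarrow W_2$ and $U_1\leftrightarrow U_2$) while exchanging the patterns $1\leftrightarrow 2$ and $3\leftrightarrow 4$; this halves the casework.

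\emph{Step 3 (bootstrap and rigidity).} Using the symmetry I may assume the finite set is a value-$(-1)$ set, so one parity class equals $0$ on a cofinite set. The first point is a bootstrap: the propagation rule, applied with two equal-valued indices of one class together with a ``bridge'' index of the other class, shows that once a parity class contains at least two indices of a common value $c$, cofinitely many indices of that class have value $c$; combined with the finiteness hypothesis this forces \emph{both} parity classes to be cofinitely constant (the two cases where the finite set is a class of evens or of odds are analogous, using \eqref{eq:even} and \eqref{eq:odd} respectively). The second, and main, point is to upgrade ``cofinitely constant'' to ``constant.'' When the exceptional set of a class has at least two off-value elements, a covering argument applies: forming admissible triples on which $f$ is constant off-value, with two members in the exceptional set and the third ranging over the cofinite off-value part of the other class, makes $l+m+n-1$ sweep out an entire parity class and forces that class to the off-value, contradicting cofinite constancy. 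This leaves only the singleton exceptional case, which the propagation rule cannot reach directly.

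\emph{Step 4 (singletons, the values at $0,1$, and the converse).} The singleton case, and the determination of $f(0),f(1)$, are handled by the triples whose \emph{output} index $l+m+n-1$ lands in $\{0,1\}$ --- the only indices where $f$ is not a priori confined to $\{0,-1\}$ --- for which the factor $d=f(0)$ or $f(1)$ is a genuine unknown and the identity of Step 2 becomes informative. For instance the admissible triple $(3,-1,0)$, with output index $1$, yields $f(0)=-f(0)f(1)$, that is $f(0)(f(1)+1)=0$, the constraint of 3); and $(1,2,-2)$, with output index $0$, yields $f(1)(f(0)+1)=0$, the constraint of 4). Running these special triples through all residual configurations eliminates the lone off-value indices and pins down the admissible pairs $(f(0),f(1))$, producing exactly 1)--4). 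The converse is a direct substitution: for each of 1)--4) one checks that \eqref{eq:odd} and \eqref{eq:even} hold identically. The main obstacle is the rigidity of Step 3 together with the singleton analysis of Step 4: because the propagation rule is blind to mixed triples and cannot cross between parity classes, ruling out a single off-value index genuinely requires the special triples through $0$ and $1$, and it is precisely there that the coupling conditions on $f(0),f(1)$ are born.
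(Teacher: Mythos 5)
Your proposal is correct in substance, and it reorganizes the argument in a way that is genuinely different from, and in one respect sharper than, the paper's proof. The paper first shows ``a finite $W_i$ or $U_i$ must be empty'' by exhibiting triples whose output lands on a forbidden index, and then does a case analysis on which sets are empty. You instead rewrite the bracket as $(a+1)(b+1)(c+1)-abc$, observe that on the two-valued range $\{0,-1\}$ supplied by Lemma~\ref{lem:fm} the identity is vacuous unless all three inputs agree (in which case the output inherits the common value), use the weight-one symmetry $f\mapsto-1-f$ to halve the casework, and then bootstrap: cofinitely constant on each parity class, constant up to singleton exceptions, singletons and the values $f(0),f(1)$ settled by triples whose output index is $0$ or $1$. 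This buys two things. First, the explicit recognition that mixed-value triples carry no information; the paper's own ``finite implies empty'' step actually stumbles here, since its second claimed contradiction, $0=f(2m)f(2n+1)f(2l+1)=f(2n_0+1)$ with input values $0,-1,-1$, uses a mixed triple whose bracket vanishes, so that instance of \eqref{eq:odd} reads $0=0$ and yields nothing. Second, a transparent origin for the coupling conditions: your triples $(3,-1,0)$ and $(1,2,-2)$ do give $f(0)(f(1)+1)=0$ and $f(1)(f(0)+1)=0$ as you compute, and it is exactly these output-at-$0,1$ triples (together with $f(0)+f(1)+1\neq0$) that eliminate the residual configuration ``one parity class cofinitely $0$ with a single value-$(-1)$ exception, the other cofinitely $-1$,'' which the paper's argument does not reach.

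Two spots need tightening before this is a complete proof. The bootstrap claim ``two indices of a common value $c$ in a class force that class to be cofinitely $c$'' is not true as literally stated: the propagating triple needs a bridge index of the \emph{same} value $c$ in the other class, and an infinite supply of such bridges; the claim only becomes valid once the finiteness hypothesis is used to make one class cofinitely constant outright, after which the other class follows from the dichotomy ``at most one index of value $c$, or at least two plus a cofinite bridge supply.'' Likewise the covering argument upgrading ``cofinitely constant'' to ``constant'' should be split: when both classes share the same cofinite value, all-on-value triples already kill \emph{every} exceptional index, singletons included; only in the opposite-value cases do singletons survive Step~3 and require the Step~4 analysis, where one must check each residual configuration against $f(0)+f(1)+1\neq0$. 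These are fill-in-the-details issues rather than flaws in the strategy.
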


\begin{proof} If $f$ satisfies one of the cases 1) - 4). By a direct computation, we know that $R$ satisfies Eq \eqref{eq:odd} and Eq \eqref{eq:even}, that is, $R$
is a  homogeneous Rota-Baxter operator on $A_{\omega}$.

\vspace{2mm}Conversely, suppose that $R$ is a homogeneous Rota-Baxter operator on $A_{\omega}$.

First, we prove that if $W_i$ ( or $U_i$) is a finite subset, then $W_i$ ( or $U_i$) is empty, where $i=1$ or $ 2$.

 Without loss of generality, we can suppose that $|W_1| < \infty$.

If $|W_1|=s, $ and $1\leq s <\infty$. Suppose $W_1=\{ 2m_0, \cdots, 2m_{s-1}\}$, $s \geq 1$. Then $|W_2|=\infty$. Without loss of generality, we can suppose that $|U_1|\neq 0.$
Then there is $n_0\neq 0$ such that $f(2n_0+1)=-1$. We assert that  $|U_2| <\infty$ and $|U_1|=\infty.$

In fact, if $|U_2|=\infty$. Then there exist $2m, 2n\in W_2$, and $2l+1\in U_2$ such that  $m\neq n$ and $2m+2n+2l=2m_0$. By  Eq \eqref{eq:even},
we get the contradiction
$0=f(2m)f(2n)f(2l+1)=f(2m_0).$  Therefore, $|U_2| <\infty$, and $|U_1|=\infty$. So there exist $2l+1, 2n+1\in U_1$, and $2m\in W_2$ such that  $l\neq n$, and $2m+2n+2l=2n_0$.
We get the contradiction
$0=f(2m)f(2n+1)f(2l+1)=f(2n_0+1).$

Summarizing above discussion, we obtain that $W_1$ is empty, that is,
$f(2m)=0$ for all $m\in \mathbb Z, m\neq 0.$

Second we discuss the characteristic of $f$.

$\bullet$  If $U_2$ is non-empty, then there is   $2n_0+1\in U_2$ such that $f(2n_0+1)= 0$.  By Eq \eqref{eq:odd} and Eq \eqref{eq:even},  for all $m\neq -n_0$ and $m\neq  0$, $f$ satisfies that
$$f(2n_0+1)f(2m)f(-2n_0-2m)=f(0)=0,$$ ~~~
$$f(2n_0+1)f(1)f(-2n_0)=(f(1)+1)f(1)=0.$$
\\Thanks to $f(0)+f(1)+1\neq 0$,  $f(0)=f(1)=0$. Again by Eq \eqref{eq:odd}, for all $m\in \mathbb Z$,
$$f(2n_0+1)f(1)f(2m)=f(2n_0+2m+1)=0,$$
we obtain that for all $l\in \mathbb Z$, $l\neq -n_0$, $f(2l+1)=0$. By the similar discussion to the above, we obtain that for all $l\in \mathbb Z$, $f(2l+1)=0$. This is the case 1).

$\bullet\bullet$  If $U_2$ is empty, then  for all $l\in \mathbb Z, l\neq 0$, $f(2l+1)=-1$. Thanks to  Eq \eqref{eq:odd} and  Eq \eqref{eq:even},  $f(0)(f(1)+1)=0.$ This is the case 3).

$\bullet$$\bullet$$\bullet$ Similarly, if $W_2$  is empty, then  for all $m\in \mathbb Z, m\neq 0$, $f(2m)=-1$. By the similar discussion,  we obtain the cases 2) and 4).

If $U_1$  is empty, then for all $m\in \mathbb Z, m\neq 0$,  $f(2m+1)=0$. We obtain the cases 1) and 4).

 If $U_2$  is empty, then for all $m\in \mathbb Z, m\neq 0$,  $f(2m+1)=-1$.   We obtain the cases 2) and 3).
\end{proof}

Now we discuss the case  $|W_i|=|U_i|=\infty, $ for $i=1, 2.$

\begin{lemma} Let $R$ be a   homogeneous Rota-Baxter operator on $A_{\omega}$. If $W_1=\{ 2m_i |  m_i < m_{i+1}, i\in \mathbb Z, i\geq 0 \}$.
Then $U_1=\{ 2l_i+1 | l_i < l_{i+1}, i\in \mathbb Z, i\geq 0 \}$, and $l_0\geq -m_1$, $l_1\geq -m_0$.
\label{lem:Um0}
\end{lemma}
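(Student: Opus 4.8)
The plan is to exploit Lemma~\ref{lem:fm}, which forces $f(m)\in\{0,-1\}$ for every $m\neq 0,1$, in order to collapse the cubic identities \eqref{eq:odd} and \eqref{eq:even} into two simple closure rules. Writing $a,b,c$ for the three values of $f$ occurring on the left of either identity, the bracketed coefficient on the right equals $(a+1)(b+1)(c+1)-abc$, so each identity reads $abc=\big[(a+1)(b+1)(c+1)-abc\big]\,g$, where $g$ is the value of $f$ at the output index $2l+2m+2n+1$ (resp.\ $2l+2m+2n$). When $a,b,c\in\{0,-1\}$ a direct check shows this imposes nothing unless the three values coincide: $a=b=c=-1$ forces $g=-1$, and $a=b=c=0$ forces $g=0$. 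This gives the two rules I shall use repeatedly: (i) if $2a+1,2b+1\in U_1$ with $a\neq b$ and $2c\in W_1$, then $f(2(a+b+c)+1)=-1$; and (ii) if $2a+1\in U_1$ and $2b,2c\in W_1$ with $b\neq c$, then $f(2(a+b+c))=-1$. The caveat is that these read as $2(a+b+c)+1\in U_1$ (resp.\ $2(a+b+c)\in W_1$) only when the output index avoids the exceptional values $1$ (resp.\ $0$), where $f$ is unconstrained.

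First I would show that $U_1$ is bounded below and that $l_0\geq -m_1$. Given any $2l+1\in U_1$, I apply rule (ii) with the two smallest elements $2m_0,2m_1$ of $W_1$, obtaining $f(2(l+m_0+m_1))=-1$. If $l+m_0+m_1\neq 0$, then $2(l+m_0+m_1)\in W_1$, and minimality of $2m_0$ gives $l+m_0+m_1\geq m_0$, i.e.\ $l\geq -m_1$. Thus every index of $U_1$ satisfies $l\geq -m_1$; since we are in the case $|U_i|=\infty$, the set $U_1$ is infinite and bounded below, hence of the form $U_1=\{2l_i+1\mid l_i<l_{i+1},\ i\geq 0\}$ with $l_0\geq -m_1$.

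Next I would prove $l_1\geq -m_0$. Applying rule (i) to the two smallest odd indices $2l_0+1,\,2l_1+1\in U_1$ and to $2m_0\in W_1$ yields $f(2(l_0+l_1+m_0)+1)=-1$. If this output index is not $1$, then $2(l_0+l_1+m_0)+1\in U_1$, and minimality of $2l_0+1$ gives $l_0+l_1+m_0\geq l_0$, that is, $l_1\geq -m_0$.

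The main obstacle is the two exceptional subcases where the output index hits $0$ or $1$, since then the closure conclusion only asserts $f(0)=-1$ or $f(1)=-1$ rather than membership in $W_1$ or $U_1$. I would dispose of these by feeding the exceptional index back into the equations. In Step~1 the bad case is $l+m_0+m_1=0$, whence $l=-(m_0+m_1)$ and $f(0)=-1$; if $m_0\leq 0$ this already gives $l\geq -m_1$, while if $m_0\geq 1$ I apply \eqref{eq:even} to the two even indices $0$ and $2m_0$ (both now carrying value $-1$) to get $f(2(l+m_0))=-1$ with $l+m_0=-m_1\neq 0$, forcing $l\geq 0$ and contradicting $l=-(m_0+m_1)<0$. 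In Step~2 the bad case $l_0+l_1+m_0=0$ forces $f(1)=-1$, and feeding the odd index $1$ together with $2l_0+1$ and $2m_0$ into \eqref{eq:odd} yields $f(2(l_0+m_0)+1)=-1$, from which minimality and the relation $m_0=-(l_0+l_1)$ force $l_0\leq 0$, and then $l_1=-l_0-m_0\geq -m_0$. Checking that these exceptional arguments exhaust every possibility, together with the bookkeeping that the chosen triples are admissible (distinct indices, not all of one parity, and $m_0\neq 0$), is the delicate part; the remainder is the routine closure computation described above.
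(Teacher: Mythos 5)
Your proposal is correct and follows essentially the same route as the paper: apply Eq.~\eqref{eq:even} to an arbitrary element of $U_1$ together with $2m_0,2m_1$ to bound $U_1$ below by $-m_1$, then Eq.~\eqref{eq:odd} to $2l_0+1,\,2l_1+1,\,2m_0$ to obtain $l_1\geq -m_0$. Your explicit treatment of the exceptional output indices $0$ and $1$ (which are excluded from $W_1$ and $U_1$ by definition, and at which $f$ need not lie in $\{0,-1\}$) is in fact more careful than the paper's one-line argument, which silently assumes the output index never equals $0$ or $1$.
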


\begin{proof}
 For all $2l+1\in U_1$,  by Eq \eqref{eq:even}, we have $f(2m_0+2m_1+2l)=-1$. Then $2l+2m_0+2m_1\geq m_0$, we obtain $l\geq -m_1$. So we can suppose that
 $U_1=\{ 2l_i+1 | l_i < l_{i+1},$ $ i\in \mathbb Z, i\geq 0\},$ where $l_0\geq -m_1$. Similarly, by Eq \eqref{eq:odd}, we get $m_0\geq -l_1$.
\end{proof}

From Lemma \ref{lem:Um0},   Eq \eqref{eq:odd} and   Eq \eqref{eq:even}, we need to discuss the following four cases:

\vspace{2mm}(1)  $l_0=-m_1$.

By a direct computation according Eq \eqref{eq:odd} and Eq \eqref{eq:even},
 we have
$$ ~~ m_i=m_1+(i-1)(m_1-m_0), ~ l_1=-m_0, ~l_i=-m_0+(i-1)(m_1-m_0), ~ i\in \mathbb Z, ~ i\geq 1,$$
where
$W_{1}=\{ 2m_i ~ |~  m_i< m_{i+1},  i=0, 1, 2, \cdots\}, $ and  $U_{1}=\{ 2l_i+1 ~ | ~l_i< l_{i+1}, i=0, 1, 2, \cdots\}. $

\vspace{2mm}(2)  $-m_1<l_0<-m_0$.

From $2(m_0+l_0+m_1)\in W_1$, and  $m_0+l_0+m_1<m_1$, we have $m_0+l_0+m_1=m_0$,
this contradicts  $l_0 < -m_1$. Therefore, this case does not exist.

\vspace{2mm}(3) $l_0=-m_0$.

From $f(0)=f(0)f(2m_0)f(2l_0+1)=f(0)f(2m_0)f(-2m_0+1)=-f(0)^2$,

\vspace{2mm}$f(1)=f(1)f(2m_0)f(2l_0+1)=f(1)f(2m_0)f(-2m_0+1)=-f(1)^2$, and

\vspace{2mm}$f(0)+f(1)+1\neq 0$, we have
$f(0)=f(1)=0$ or $f(0)=f(1)=-1$.

\vspace{2mm}$\bullet$ If  $f(0)=f(1)=0$. Then  for all $k, l\in Z$, $k>0$ and $l>0$,

\vspace{2mm}$f(2m_0-2k)f(-2m_0-2l+1)f(0)=f(-2(k+l))=0$,

\vspace{2mm} $f(2m_0-2k)f(-2m_0-2l+1)f(1)=f(-2(k+l)+1)=0$,
\\we obtain   $m_0\geq 1, -m_0=l_0\geq -1$. We  assert that
$$ m_0=1, l_0=-1.$$

In fact, if there is $k_0>1$ such that $f(2k_0)=0$, then $f(-2k_0-2+1)=0$.
Thanks to  Eq \eqref{eq:odd}, we get the contradiction $f(1)f(2k_0)f(-2k_0-2+1)=f(-2+1)=f(2l_0+1)=0.$
Therefore,

$$W_1=\{ 2k, k\in \mathbb Z, k>0\}, ~~ U_1=\{-1,  2k+1, k\in \mathbb Z, k> 0\}.$$

\vspace{2mm}$\bullet\bullet$ If  $f(0)=f(1)=-1$.  For all $l, m, n, s\in Z$, $lmns\neq 0$, if $f(2l+1)=f(2n+1)=f(2m)=f(2s)=-1$, then
$f(2l+2n+1)=f(2m+2s)=f(2l+2m)=f(2l+2m+1)=-1$. We obtain that  $2m_1+2l_0=2m_1-2m_0\in W_1$, $2l_1+2l_0+1=2l_1-2m_0+1\in U_1$.

If $m_0>0$,  by Lemma \ref{lem:Um0},   $m_1-m_0>0$, $l_1-m_0<l_1$.
Then  $m_1=2m_0$, $l_1=m_0$. Inductively, suppose $m_k=(k+1)m_0,$ $l_k=km_0$. Since
$$m_{k-1}=km_0=m_k-m_0<m_{k+1}-m_0<m_{k+1},$$
$$m_{k+1}=(k+2)m_0,~~l_{k-1}=(k-1)m_0=l_k-m_0<l_{k+1}-m_0< l_{k+1}.$$
Then  $l_{k+1}=(k+1)m_0$. Therefore,
$$W_1=\{ 2km_0~ | ~ k\in \mathbb Z, k>0\}, ~~ U_1=\{-2m_0+1,  2km_0+1 ~ | ~ k\in \mathbb Z, k> 0\}.$$

Similarly, if $m_0<0$,  we have $$W_1=\{ 2m_0, -2km_0~ | ~ k\in \mathbb Z, k>0\}, ~~ U_1=\{2km_0+1 ~ | ~ k\in Z, k> 0\}.$$

\vspace{2mm}(4) $l_0>-m_0$.

We can choose  $W_1=\{ 2m_k~ | ~ m_k< m_{k+1}, m_k\in \mathbb Z, k\geq 0\}$, and  $U_1=\{ 2l_k+1~ | ~ l_k<l_{k+1}, k\geq 0\}$.

If there is $m '>m_0$ such that $f(2m')=0$. From $m>m_0$, $-m'<-m_0<l_0$, we have $f(-2m'+1)=0$. By Eq \eqref{eq:odd} and Eq \eqref{eq:even},
$$f(0)f(2m')f(-2m'+1)=(f(0)+1)f(0)=0,$$
$$ f(1)f(2m')f(-2m'+1)=(f(1)+1)f(1)=0.$$
Thanks to $f(0)+f(1)+1\neq 0$,   $f(0)=f(1)=0$, or $f(0)=f(1)=-1$.

$\bullet$ If $f(0)=f(1)=-1$. From $f(2m_0+2l_0)=-1$ and $f(2m_0+2l_0+1)=-1$, we obtain $m_0 > 0$,  $l_0>0$.

 In the case $m_0=l_0$, from $f(k2m_0)=-1$, we have  $l_0=m_0 > 1$. Therefore,

 $\{2km_0~ | ~k\in Z, k >0\}\subseteq W_1$, ~  and  ~~ $\{2km_0+1~ | ~k\in \mathbb Z, k >0\}\subseteq U_1$.

 If there is $0< r <m_0$, $k>0$ such that $f(2m_0k+2r)=0$. From $f(-2r)=f(-2km_0+1)=0$, and Eq \eqref{eq:even},
 we get the  contradiction
 $$
 0=f(2m_0k+2r)f(-2r)f(-2km_0+1)=f(0)=-1.
 $$
Therefore, $f(2m)\neq 0$, for all $m\geq m_0$, that is,
$$\{2km_0~ | ~k\in \mathbb Z, k >0\}= W_1.$$

Similarly we have $f(2m+1)\neq 0$ for all $m\geq l_0$, that is,
$$\{2km_0+1~ | ~k\in \mathbb Z, k >0\}\subseteq U_1.$$

Thanks to Eq \eqref{eq:odd} and Eq \eqref{eq:even}, $f(2m)=f(2m+1)=-1, ~~ \forall m\in\mathbb Z, m\geq m_0.$

 If $l_0\neq m_0$.  From $f(2l_0+2m_0)=f(2m_0+2l_0+1)=-1$, we have

 $\{2km_0+2ln_0~ | ~k, l\in \mathbb Z, k >0, l \geq 0\}\subseteq W_1$, ~  and  ~~ $\{2km_0+2ln_0+1 | k, l\in Z, k \geq 0, l >0\}\subseteq U_1$.

By the similar discussion to the above,
 $W_1=\{2m | m\in \mathbb  Z, m\geq m_0\}$, ~  and  ~~ $U_1=\{2n+1 | n\in \mathbb Z, n\geq l_0\}$, and
 for all $l\in W_1\cup U_1$,  $f(l)=-1.$

 $\bullet$$\bullet$ Now we prove that the case $f(0)=f(1)=0$ does not exist.

 If $f$ satisfies $f(0)=f(1)=0$. From   $l_0>-m_0 > -m'$, $l_0>-m'+1$,  we have $f(2m')=0$, and
 $$f(0)f(2m')f(-2m'+2+1)=(f(0)+1)f(2)=0,$$
  $$f(1)f(2m')f(-2m'+2+1)=(f(0)+1)f(3)=0.$$
Then $f(2)=f(3)=0$. For  $k\in \mathbb Z$, $k > 0$, if $f(2k)=f(2k+1)=0$,  by Eq \eqref{eq:odd} and Eq \eqref{eq:even}, we have
 $$f(0)f(2k)f(2+1)=(f(0)+1)f(2k+2)=f(2k+2)=0,$$
  $$f(1)f(2k)f(2+1)=(f(0)+1)f(2k+2+1)=f(2k+2+1)=0.$$

    Therefore, for all positive $k\in \mathbb Z$, $f(2k)=f(2k+2+1)=0$, this contradicts $|U_1|=\infty$.

Summarizing above discussion, we obtain the following result.

\begin{theorem} Let $R$ be a   homogeneous Rota-Baxter operator on $A_{\omega}$ with $f(0)+f(1)+1\neq 0$, and
$W_1=\{ 2m_i ~|~ i\in Z, i\geq 0,  m_i <m_{i+1}\}$, $U_1=\{ 2l_i+1 ~|~ i\in Z, i\geq 0,  l_i <  l_{i+1}\}$.
Then  the map $f: \mathbb Z\rightarrow \mathbb F$  in Eq \eqref{eq:R0} is one of the following cases:

1) There exist $m_0, m_1\in Z$, $m_0 < m_1$ such that
 for all $~ k\in \mathbb Z, k\geq 0,$
$$f(2m_0)=f(2m_1+ 2k(m_1-m_0))=-1,$$
$$f(-2m_1+1)=f(-2m_0+2k(m_1-m_0)+1)=-1, $$
and $f(m)=0 $ for the remaining $m\in \mathbb Z$.

2) For all $ k\in \mathbb Z, k > 0,$
$$f(2k)=f(-1)=f(2k+1)=-1, $$
and $f(m)=0 $ for the remaining $m\in \mathbb Z$.

3) There is $m_0\in \mathbb Z$, $m_0 > 0$ such that for all $  k\in \mathbb Z, k \geq 0,$
$$f(2km_0)=f(-2m_0+1)=f(2km_0+1)=-1,$$
and $f(m)=0 $ for the remaining $m\in \mathbb Z$.

4)  There is  $m_0\in \mathbb Z$, $m_0 < 0$ such that
 for all $ k\in \mathbb Z, k \leq  0,$
$$f(2m_0)=f(2km_0)=f(2km_0+1)=-1, $$
and $f(m)=0 $ for the remaining $m\in \mathbb Z$.

5) There exist $m_0, l_0\in \mathbb Z$, $l_0 > -m_0$ such that for all $ m, l\in \mathbb Z, m\geq m_0, l\geq l_0,$
$$f(2m)=f(2l+1)=-1, $$
and $f(m)=0 $ for the remaining $m\in \mathbb Z$.

6) $f(0)=f(1)=-1$, and there is $m_0\in \mathbb Z, m_0>1$ such that
 $f(m)=-1$, for all $m\in \mathbb Z$, $m\geq m_0$, and $f(m)=0 $ for the remaining $m\in \mathbb Z$.

 7) $f(0)=f(1)=-1$ and  there exist  $m_0, l_0\in \mathbb Z$ such that $m_0 > 0$, $l_0>0$, $m_0\neq l_0,$
  $f(2m)=f(2n+1)=-1$,  for all $m, n\in \mathbb Z$, $m\geq m_0, n\geq l_0$,
 and $f(m)=0 $ for the remaining $m\in \mathbb Z$.

\label{thm:Rm0}
\end{theorem}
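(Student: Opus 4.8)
The plan is to translate the two cubic functional relations \eqref{eq:odd} and \eqref{eq:even} into a purely combinatorial description of the support of $f$, and then read off the seven families. Because we are in the regime $f(0)+f(1)+1\neq 0$, Lemma~\ref{lem:fm} forces $f(m)\in\{0,-1\}$ for every $m\neq 0,1$; thus $f$ is completely determined by the two scalars $f(0),f(1)$ together with the sets $W_1$ and $U_1$ on which $f$ equals $-1$ at nonzero even, respectively odd, arguments. Since all four sets $W_i,U_i$ are infinite here, my first step is to verify that $W_1$ and $U_1$ are bounded below, so that they may be written as strictly increasing sequences $\{2m_i\}_{i\ge 0}$ and $\{2l_i+1\}_{i\ge 0}$. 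The engine driving everything is Lemma~\ref{lem:Um0}: feeding the two smallest even elements $2m_0,2m_1$ into \eqref{eq:even} (and, symmetrically, the two smallest odd elements into \eqref{eq:odd}) yields the coupling inequalities $l_0\ge -m_1$ and $m_0\ge -l_1$, and it is precisely these inequalities that make the classification finite.

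Next I would branch on the position of $l_0$ relative to $-m_1$ and $-m_0$, giving the four cases $l_0=-m_1$, $-m_1<l_0<-m_0$, $l_0=-m_0$, and $l_0>-m_0$. The strict-interior case is discarded by one use of \eqref{eq:even}: taking $2l_0+1\in U_1$ and $2m_0,2m_1\in W_1$ shows $2(m_0+l_0+m_1)\in W_1$, while $m_0+l_0+m_1<m_1$ forces $m_0+l_0+m_1=m_0$, i.e. $l_0=-m_1$, contradicting $-m_1<l_0$. In the boundary case $l_0=-m_1$, a direct recursion from \eqref{eq:odd} and \eqref{eq:even} pins $W_1$ and $U_1$ down to two interlocking arithmetic progressions of common difference $m_1-m_0$ (case~1)). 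In the case $l_0=-m_0$, substituting $2m_0$ and $-2m_0+1$ into the relations forces $f(0)=-f(0)^2$ and $f(1)=-f(1)^2$, so $f(0)=f(1)=0$ or $f(0)=f(1)=-1$; splitting further on the sign of $m_0$ and bootstrapping along the progression $\{2km_0\}$ produces cases~2), 3) and 4). Finally, when $l_0>-m_0$ one shows $W_1$ and $U_1$ are full half-lines or again the $f(0)=f(1)=-1$ progressions, giving cases~5), 6) and 7); here the competing possibility $f(0)=f(1)=0$ must be excluded by propagating zeros through \eqref{eq:odd} and \eqref{eq:even} until one contradicts $|U_1|=\infty$.

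For the converse, that each of the seven explicit maps $f$ is genuinely a homogeneous Rota-Baxter operator, I would substitute each support pattern back into \eqref{eq:odd} and \eqref{eq:even}; since every pattern is self-similar, being a union of arithmetic progressions or half-lines all carrying the value $-1$, each identity collapses to a finite number of residue-class verifications.

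I expect the main obstacle to be the bootstrapping arguments inside the cases $l_0=-m_0$ and $l_0>-m_0$. There one must choose the substitutions $(l,m,n)$ in \eqref{eq:odd} and \eqref{eq:even} so that a handful of known values $-1$ force an entire progression or half-line to equal $-1$, and at the same time rule out any stray zero surviving in the support, all while keeping the $f(0)$ and $f(1)$ bookkeeping consistent with $f(0)+f(1)+1\neq 0$. Making these inductions close without gaps, and without overlap between the seven families, is the delicate part, rather than any single algebraic manipulation.
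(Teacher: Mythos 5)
Your proposal follows essentially the same route as the paper: reduce to $f(m)\in\{0,-1\}$ via Lemma \ref{lem:fm}, invoke Lemma \ref{lem:Um0} for the coupling inequalities $l_0\ge -m_1$, $m_0\ge -l_1$, branch on the four positions of $l_0$ relative to $-m_1$ and $-m_0$, kill the interior case by the same substitution, and recover cases 1)--7) from the remaining three branches exactly as the paper does. The only (harmless) difference is that you propose to verify the lower-boundedness of $W_1$ and to check the converse direction explicitly, whereas the paper takes the former as the standing hypothesis of this theorem and leaves the latter implicit.
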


By the similar discussion, we get the following result.

\begin{lemma} Let $R$ be a   homogeneous Rota-Baxter operator on $A_{\omega}$ and $W_1=\{ 2m_i ~|~   m_i > m_{i+1}, i\in \mathbb Z, i\geq 0\}$.
Then $U_1=\{ 2l_i+1 ~| ~ l_i > l_{i+1}, i\in \mathbb Z, i\geq 0 \}$, and $l_0\leq -m_1$, $l_1\leq -m_0$.
\label{lem:Um1}
\end{lemma}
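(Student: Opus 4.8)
The plan is to run, with every inequality reversed, the argument already used for Lemma~\ref{lem:Um0}, since the hypothesis now arranges the even indices in \emph{decreasing} order, so that $W_1$ is bounded above (by $2m_0$) rather than below. We work throughout in the regime $f(0)+f(1)+1\neq 0$, so Lemma~\ref{lem:fm} gives $f(m)\in\{0,-1\}$ for every $m\neq 0,1$. Every index occurring in $W_1\cup U_1$ avoids $0$ and $1$ and carries $f\neq 0$, hence carries $f=-1$; this is exactly what makes the cubic identities \eqref{eq:odd} and \eqref{eq:even} degenerate into linear constraints.

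First I would establish that $U_1$ is bounded above, which lets us list it in decreasing order. Fix an arbitrary $2l+1\in U_1$ and substitute the two largest even indices $2m_0,2m_1\in W_1$ (distinct, since $m_0\neq m_1$) together with $2l+1$ into \eqref{eq:even}. Because $f(2l+1)=f(2m_0)=f(2m_1)=-1$, the left-hand side equals $-1$ while the bracketed factor multiplying $f(2l+2m_0+2m_1)$ collapses to $1+1+1-1-1-1+1=1$, so $f(2(l+m_0+m_1))=-1$. Thus $2(l+m_0+m_1)$ is a nonzero even index on which $f$ does not vanish, hence lies in $W_1$ and is at most its largest member $2m_0$; this yields $l+m_0+m_1\le m_0$, that is $l\le -m_1$. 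As this holds for every element of $U_1$ and $|U_1|=\infty$, we may write $U_1=\{2l_i+1\mid l_i>l_{i+1},\ i\ge 0\}$ with $l_0\le -m_1$.

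The companion bound $l_1\le -m_0$ comes out symmetrically from \eqref{eq:odd}. For each $2m\in W_1$, insert the two largest odd indices $2l_0+1,2l_1+1\in U_1$ and the even index $2m$; the identical cancellation gives $f(2(m+l_0+l_1)+1)=-1$, so $2(m+l_0+l_1)+1\in U_1$ is at most the largest odd member $2l_0+1$. Hence $m+l_0+l_1\le l_0$, i.e. $m\le -l_1$ for every even index in $W_1$; taking the largest one, $m=m_0$, gives $m_0\le -l_1$, equivalently $l_1\le -m_0$.

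The computations are routine once the bracketed coefficient is seen to reduce to $1$; the genuinely delicate point is the degenerate case $l+m_0+m_1=0$ in the first step (and its analogue $m+l_0+l_1=0$ in the second), where the derived relation records only $f(0)=-1$ and says nothing about membership in $W_1$ or $U_1$. I expect this boundary bookkeeping to be the main obstacle: one must argue, as implicitly in Lemma~\ref{lem:Um0}, that such degenerate indices do not produce elements of $U_1$ violating the bound $l_0\le -m_1$, for instance by re-running \eqref{eq:even} on a different pair from $W_1$ or by appealing to the full system \eqref{eq:odd}, \eqref{eq:even}. Once this is settled, the decreasing enumeration of $U_1$ together with $l_0\le -m_1$ and $l_1\le -m_0$ follows.
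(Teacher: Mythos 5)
Your argument is essentially the paper's own proof: apply \eqref{eq:even} to an arbitrary $2l+1\in U_1$ together with $2m_0,2m_1$ to get $f(2(l+m_0+m_1))=-1$, hence $2(l+m_0+m_1)\leq 2m_0$ and $l\leq -m_1$, then use \eqref{eq:odd} symmetrically to get $m_0\leq -l_1$. The degenerate case $l+m_0+m_1=0$ (and its analogue for the odd indices) that you flag is indeed the only subtle point, and the paper's two-line proof passes over it in silence as well, so you have reproduced the intended argument.
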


\begin{proof}
 For all $2l+1\in U_1$,  by Eq \eqref{eq:even},  $f(2m_0+2m_1+2l)=-1$. Then $2l+2m_0+2m_1\leq 2m_0$, and $l\leq -m_1$. So we can suppose
 $U_1=\{ 2l_i+1 |  l_i > l_{i+1}, i\in \mathbb Z, i\geq 0 \},$  $l_0\leq -m_1$. Thanks to  Eq \eqref{eq:odd}, $m_0\leq -l_1$.
\end{proof}

\begin{theorem} Let $R$ be a   homogeneous Rota-Baxter operator on $A_{\omega}$ and

$W_1=\{ 2m_i | i\in \mathbb Z, i\geq 0,  m_i > m_{i+1}\}$, $U_1=\{ 2l_i+1 | i\in \mathbb Z, i\geq 0,  l_i >  l_{i+1}\}$.
\\Then  the map $f: \mathbb Z\rightarrow \mathbb F$  in Eq \eqref{eq:R0} is one of the following cases:

1) There is $m_0, m_1\in Z$, $m_0 > m_1$ such that for all $ k\in \mathbb Z, k\leq 0,$
$$f(2m_0)=f(2m_1+ 2k(m_0-m_1))=-1,$$
$$f(-2m_1+1)=f(-2m_0+2k(m_0-m_1)+1)=-1,$$
and $f(m)=0 $ for the remaining $m\in \mathbb Z$.

2) For all $ k\in \mathbb Z, k < 0,$
$$f(2)=f(2k)=f(2k+1)=-1, ~$$
and $f(m)=0 $ for the remaining $m\in \mathbb Z$.

3) There is $m_0\in \mathbb Z$, $m_0 < 0$ such that for all $ ~ k\in \mathbb Z, k \geq 0$,
$$f(2km_0)=f(-2m_0+1)=f(2km_0+1)=-1,$$
and $f(m)=0 $ for the remaining $m\in \mathbb Z$.

4) There is $m_0\in \mathbb Z$, $m_0 >0$ such that for all $~ k\in \mathbb Z, k \leq 0,$
$$f(2m_0)=f(2km_0)=f(2km_0+1)=-1,$$
and $f(m)=0$ for the remaining $m\in \mathbb Z$.

5) There exist $m_0, l_0\in \mathbb Z$, $l_0 <  -m_0$ such that for all $ ~ m, l\in \mathbb Z, m\leq m_0, l\leq l_0,$
$$f(2m)=f(2l+1)=-1,$$
and $f(m)=0$ for the remaining $m\in \mathbb Z$.

6) $f(0)=f(1)=-1$, and there is $m_0\in \mathbb Z, m_0 < -1$ such that
 $f(l)=-1$ for all $l\leq 2m_0+1,$ and $f(m)=0$ for the remaining $m\in \mathbb Z$.

 7) $f(0)=f(1)=-1$ and  there exist $m_0, l_0\in \mathbb Z$, $l_0 < 0$, $m_0 < 0$, $m_0\neq l_0$ such that  for all $m, l\in \mathbb Z, m\leq m_0, l\leq l_0,$  $f(2m)=f(2l+1)=-1$
and $f(m)=0$ for the remaining $m\in \mathbb Z$.

\label{thm:Rm1}
\end{theorem}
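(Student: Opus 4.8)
The plan is to derive Theorem~\ref{thm:Rm1} from Theorem~\ref{thm:Rm0} by means of an order-reversing symmetry of $A_{\omega}$, rather than repeating the entire case analysis. Define the linear involution $\phi\colon A_{\omega}\to A_{\omega}$ by $\phi(L_m)=L_{1-m}$. I would first verify that $\phi$ is an automorphism of the $3$-Lie algebra. Working with the determinant \eqref{eq:det}, subtracting the second row from the third and extracting a sign from the first and third rows gives $D(1-l,1-m,1-n)=D(l,m,n)$, while $(1-l)+(1-m)+(1-n)-1=1-(l+m+n-1)$, so $\phi([L_l,L_m,L_n])=D(l,m,n)\,L_{1-(l+m+n-1)}=[\phi(L_l),\phi(L_m),\phi(L_n)]$. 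Since $\phi^{2}=\mathrm{id}$, $\phi$ is an involutive automorphism.

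Next, given a homogeneous Rota-Baxter operator $R$ attached to $f$ via \eqref{eq:R0}, set $\tilde R=\phi R\phi^{-1}$. Conjugating the defining identity \eqref{eq:3.1} by the automorphism $\phi$ shows that $\tilde R$ is again a Rota-Baxter operator of weight $1$, and $\tilde R(L_m)=f(1-m)L_m$, so $\tilde R$ is the homogeneous operator attached to $\tilde f(m):=f(1-m)$. Because $\tilde f(0)+\tilde f(1)+1=f(1)+f(0)+1$, the standing hypothesis $f(0)+f(1)+1\neq0$ of this subsection is preserved. The substitution $m\mapsto 1-m$ reverses the order of $\mathbb Z$ and exchanges the two parities, so it interchanges the even and odd index sets and turns ``bounded above'' into ``bounded below''; in particular, if $R$ lies in the regime of Theorem~\ref{thm:Rm1} (where, by Lemma~\ref{lem:Um1}, $W_1$ and hence $U_1$ are bounded above and infinite), then $\tilde W_1,\tilde U_1$ are bounded below and infinite, i.e. $\tilde R$ lies in the regime of Theorem~\ref{thm:Rm0}. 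Applying Theorem~\ref{thm:Rm0} to $\tilde f$ identifies it with one of the seven families listed there, and pulling back through $f(m)=\tilde f(1-m)$ yields the description of $f$. This shows that the solutions in the present regime are exactly the reflections of those of Theorem~\ref{thm:Rm0}, hence finite in number and of the asserted form.

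The step I expect to be the main obstacle is the explicit matching of the seven reflected families with the seven families in the statement of Theorem~\ref{thm:Rm1}. The reflection $m\mapsto 1-m$ has the half-integer $\tfrac12$ as its only fixed point, so it does \emph{not} carry a threshold sitting at an integer to a threshold at the mirror integer; consequently the strict versus non-strict endpoint conventions in the two theorems must be tracked with care, and the exchange $f(0)\leftrightarrow f(1)$ together with the exclusion of the indices $0,1$ from the sets $W_i,U_i$ must be bookkept precisely. For example, case~6 of Theorem~\ref{thm:Rm0} at its smallest threshold reflects to the gapless configuration in which $f\equiv-1$ on $(-\infty,1]$, which is where the correct endpoint bound in the mirror case has to be pinned down; getting every such endpoint right is the real work. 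An alternative that bypasses this matching altogether is to mirror the proof of Theorem~\ref{thm:Rm0} verbatim: replace each appeal to Lemma~\ref{lem:Um0} by Lemma~\ref{lem:Um1}, and reverse every ordering $m_i<m_{i+1}$, $l_i<l_{i+1}$, so that the same four subcases, driven by the relation between $l_0$ and $-m_0$, reproduce the seven families directly from \eqref{eq:odd} and \eqref{eq:even}.
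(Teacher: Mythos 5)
Your primary route is genuinely different from the paper's. The paper's entire proof of Theorem \ref{thm:Rm1} is the single sentence ``Apply the arguments used in the proof of Theorem \ref{thm:Rm0}'', i.e.\ exactly the fallback you describe in your last sentence: rerun the four-case analysis with Lemma \ref{lem:Um1} in place of Lemma \ref{lem:Um0} and all inequalities reversed. Your main proposal instead transports Theorem \ref{thm:Rm0} by the involution $\phi(L_m)=L_{1-m}$, and this does work: $(-1)^{1-l}=-(-1)^l$, so the first row of \eqref{eq:det} picks up a factor $-1$ while the third row becomes $r_2-r_3$, giving $D(1-l,1-m,1-n)=D(l,m,n)$, and the grading index transforms as $1-(l+m+n-1)$; hence $\phi$ is an automorphism, $\tilde R=\phi R\phi$ is again a weight-$1$ Rota--Baxter operator with $\tilde f(m)=f(1-m)$, the condition $f(0)+f(1)+1\neq0$ is preserved, and the reflection swaps parities and exchanges ``bounded above'' with ``bounded below'', so it carries the regime of Theorem \ref{thm:Rm1} onto that of Theorem \ref{thm:Rm0}. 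What your approach buys is a genuine logical reduction (completeness of the list in Theorem \ref{thm:Rm1} follows from completeness in Theorem \ref{thm:Rm0} with no new casework), at the price of the final translation step, which you correctly flag as the real work. That caveat is not idle: carrying out the matching shows, for instance, that Rm0 case 3 reflects onto Rm1 case 4 and vice versa (the cases get permuted), and the reflection of Rm0 case 6 produces thresholds $1-m_0$ ranging over all integers $\le -1$, whereas Rm1 case 6 as printed only allows odd thresholds $2m_0+1\le -3$; so the reflected list and the printed list do not literally coincide, which points to an imprecision in one of the two stated theorems rather than to a flaw in your method. Since you stop short of completing this matching, the write-up is a correct strategy with the decisive bookkeeping still owed; either finish that translation explicitly or fall back to mirroring the Rm0 proof as the paper does.
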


\begin{proof}

Apply the arguments used in the proof of Theorem \ref{thm:Rm0}.

\end{proof}

\begin{theorem} If $\inf W_i=\inf U_i=-\infty$ and   $\sup W_i=\sup U_i=+\infty$. Then $R$ is a   homogeneous Rota-Baxter operator on $A_{\omega}$ if and only if  the map
 $f: \mathbb Z\rightarrow \mathbb F$ in Eq \eqref{eq:R0} is
one of the following cases:

1) There is $m_0\in \mathbb Z$, $m_0\neq 0$ such that for all $k\in \mathbb Z, $
$f(2km_0)=f(2m_0k+1)=0,$ and $f(m)=-1$ for the remaining $m\in \mathbb Z$.

2) There is $m_0\in \mathbb Z$, $m_0\neq 0$ such that for all $k\in \mathbb Z, $
$f(2km_0)=f(2m_0k+1)=-1,$ and $f(m)=0$ for the remaining $m\in \mathbb Z$.

\label{thm:R01}
\end{theorem}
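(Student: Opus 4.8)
The plan is to reduce the two functional equations to purely combinatorial closure rules on the support of $f$, and then to use the bi-infiniteness hypothesis to rigidify that support into the two periodic patterns listed. Since we are in the regime $f(0)+f(1)+1\neq 0$, Lemma \ref{lem:fm} already gives $f(m)\in\{0,-1\}$ for all $m\neq 0,1$. The two exceptional values $f(0),f(1)$ are not covered directly, but I would postpone them: they are forced into $\{0,-1\}$ and into the global pattern at the end, by applying the equations below with resultant index $0$ or $1$, which is possible precisely because $W_1,W_2,U_1,U_2$ are all unbounded in both directions. Thus throughout I may assume $f$ takes only the values $0$ and $-1$.

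Next I would simplify \eqref{eq:odd} and \eqref{eq:even}. The essential identity is
\[
f(a)f(b)+f(a)f(c)+f(b)f(c)+f(a)+f(b)+f(c)+1=(1+f(a))(1+f(b))(1+f(c))-f(a)f(b)f(c),
\]
in which every factor $1+f(\cdot)$ lies in $\{0,1\}$. Substituting into \eqref{eq:even} and examining the parity patterns of $\{f(2l+1),f(2m),f(2n)\}$ shows that the equation holds automatically whenever these three values are not all equal, while it reduces to exactly two closure rules: if $f(2l+1)=f(2m)=f(2n)=-1$ with $m\neq n$ then $f(2l+2m+2n)=-1$, and if all three equal $0$ with $m\neq n$ then $f(2l+2m+2n)=0$. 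Equation \eqref{eq:odd} gives the parallel pair of rules producing the odd index $2l+2m+2n+1$. So being a homogeneous Rota-Baxter operator is equivalent, in this setting, to these four closure rules.

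The heart of the proof is to convert these local closure rules, under bi-infiniteness, into the global description. Writing $P=\{j:f(2j)=-1\}$, $Q=\{j:f(2j+1)=-1\}$ and their complements $P^{c},Q^{c}$, the rules say that each of these four sets is stable under adjoining two further admissible elements. I would first use the two cross rules to link the even and odd supports (showing in particular $P=Q$), and then exploit that each support contains elements arbitrarily far out in both directions: feeding such extreme elements back through the closure rules forces each support to be invariant under a fixed translation, which rigidifies it into a single arithmetic progression pattern. Reading off the even and odd parts of this pattern yields precisely the two families of the statement, the dichotomy being whether the distinguished value is $0$ (case 1) or $-1$ (case 2). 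Here the full bi-infiniteness is what rules out the one-sided patterns already classified in Theorems \ref{thm:Rm0} and \ref{thm:Rm1}.

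Finally, the converse is the routine direction: for $f$ of either listed form one substitutes into \eqref{eq:odd} and \eqref{eq:even} and verifies the four closure rules, which is immediate once one observes that in half-index coordinates the support is a single residue class, so the relevant ternary sums preserve membership. I expect the rigidity argument of the third paragraph to be the only real obstacle: the closure rules are local and permissive (they constrain only all-equal triples), so the work lies entirely in showing that bi-infiniteness leaves no freedom beyond the stated periodic pattern, rather than in any individual computation.
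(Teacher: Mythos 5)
Your second paragraph is correct and is a genuinely cleaner packaging of what the paper does computationally: for $\{0,-1\}$-valued $f$ the identity $1+\sum f+\sum ff=\prod(1+f)-\prod f$ does reduce \eqref{eq:odd} and \eqref{eq:even} to four closure rules, two on the zero-supports and two on their complements. The difficulty is that the two places you declare easy are exactly where the content lies. Your third paragraph never proves anything: ``feeding such extreme elements back through the closure rules forces each support to be invariant under a fixed translation'' \emph{is} the theorem, and no mechanism is given. The paper's mechanism is concrete and you would have to reproduce something like it: once $0$ is known to lie in (say) the odd zero-support, that support is closed under sums of two distinct elements; bi-infiniteness gives a least positive element $l_0$ and a greatest negative element $l_0'$, and $l_0+l_0'$ lies in the support strictly between $l_0'$ and $l_0$, forcing $l_0'=-l_0$ (and $f(1)=0$); an induction along the ordered elements then yields $l_0\mathbb{Z}$, and the cross rules identify the even and odd periods. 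Relatedly, $f(0)$ and $f(1)$ cannot simply be postponed: they occur as \emph{inputs} to the closure rules (e.g.\ $n=0$ in \eqref{eq:even}), so until they are pinned to $\{0,-1\}$ your ``all equal vs.\ not all equal'' trichotomy is unavailable for triples through $0$ or $1$; the paper disposes of $f(0)\notin\{0,-1\}$ first, by the same sandwich argument applied to the zero sets.

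The more serious problem is your last paragraph: the converse is not ``immediate'', and as stated it is false. The four closure rules include the two rules on the \emph{complements} of the supports, and the complement of a residue class $m_0\mathbb{Z}$ is not closed under the relevant ternary sums unless $|m_0|\le 2$. Concretely, take case 1) with $m_0=3$, so $f(6k)=f(6k+1)=0$ and $f=-1$ elsewhere; then $f(3)=f(2)=f(8)=-1$ while $f(12)=0$, and \eqref{eq:even} with $(2l+1,2m,2n)=(3,2,8)$ gives $f(3)f(2)f(8)=-1$ on the left but $(1+1+1-1-1-1+1)f(12)=f(12)=0$ on the right. So neither listed family is a Rota--Baxter operator when $|m_0|\ge 3$, the hypothesis of the theorem excludes $|m_0|=1$, and only $|m_0|=2$ survives. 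Your ``single residue class'' observation checks the two rules on the supports but silently skips the two rules on the complements, which is exactly the asymmetric constraint. To be fair, the paper's own proof has the same defect --- it carries out only the ``only if'' analysis and never verifies the ``if'' direction --- so an honest completion of your approach would end up correcting Theorem \ref{thm:R01} rather than reproving it; but as written your proposal inherits the error and labels it routine.
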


\begin{proof}

Let $R$ be a   homogeneous Rota-Baxter operator on $A_{\omega}$.  Suppose

$$W_{2}=\{ 2m_i, 2m_i' | i\in \mathbb Z, i \geq 0\}, ~ U_{2}=\{ 2l+1, 2l'_{i}+1| i\in \mathbb Z, i\geq 0\},$$ where
$$ \cdots <m'_{i+1} < m'_i < \cdots < m'_1 < m'_0 <0 < m_0 < 2m_1<\cdots < m_i < m_{i+1} < \cdots,$$
$$ \cdots <  l'_{i+1} < l'_i < \cdots < l'_1 < l'_0 < 0 <l_0 < l_1 <\cdots < l_i < l_{i+1} < \cdots.$$

 If $f(0)=b\neq 0, -1.$ Thanks to Eq \eqref{eq:odd}, for $l, k\in \mathbb Z$ and $l\neq k$, if $f(2l+1)=f(2k+1)=0, $ then $f(2l+2k+1)=0, $ and $f(2l_0+2l'_0+1)=0$. Since $2l'_0+1 < 2l_0+2l'_0+1<2l_0+1$,  $f(1)=0$. By Eq \eqref{eq:even}, and  $f(2m)=f(2n)=0, m\neq n$,  we have $f(2m+2n)=0.$ From $f(2m_0+2m'_0)=0$, and $2m'_0 <2m_0+2m'_0<2m_0$, we get the contradiction  $f(0)=b=0$.
 Therefore, $f(0)=0$, or $f(0)=-1$.

\vspace{2mm} If $f(0)=0$. By $2l'_0+1< 2l_0+2l'_0+1< 2l_0+1$ and Eq \eqref{eq:odd},
 $f(0)f(2l_0+1)f(2l'_0+1)=f(2l_0+2l'_0+1)=0$. Therefore,  $l_0'=-l_0$ and $f(1)=0$.

Similar discussion, we obtain that for all $ i\in \mathbb Z, i\geq 0,$
$m_i=-m'_{i}, ~~ l_i=-l'_{i}. $
 Therefore,
 for all $2m, 2n\in W_2, 2l+1, 2s+1\in U_2$, we have $2m+2n, 2m+2l\in W_2$ and $2l+2s+1, 2l+2m+1\in U_2$. Since
$0< 2m_1-2m_0=2m_1+2m'_0< 2m_1 $, $2m_1-2m_0=2m_0$, and $m_1=2m_0$.
 Inductively, we have
  $m_i=(i+1)m_0$, ~~ $m'_i=-(i+1)m_0$,~ $l_i=(i+1)l_0$, $l'_i=-(i+1)l_0$, for all $i\in \mathbb Z, i\geq 0$.

 \vspace{2mm} We  affirm $m_0=l_0$.

 In fact, if $m_0\neq l_0$, then $m_0-l_0\neq 0$. From $2m_0-2l_0=2m_0+2l'_0< 2m_0$, $2m_0-2l_0\in W_2$, and  $2l'_0+1<2m_0-2l_0+1\in U_2$, we get the  contradiction $2m_0-2l_0<0$ and $2m_0-2l_0>0$.
 Therefore, $m_0=l_0$. We get case 1).

 By the similar discussion, if $f(0)=-1$, then $f(1)=-1$, and we obtain the  case 2).
\end{proof}

\subsubsection{\bf Homogeneous Rota-Baxter operators with  $f(0)=a\neq 0$ and $f(0)+f(1)+1= 0$ }

In this section we discuss  homogeneous Rota-Baxter operators on $A_{\omega}$ of weight $1$ defined by Eq \eqref{eq:R0}  with  $f(0)=a\neq 0$ and $f(0)+f(1)+1= 0$.

\begin{lemma}
Let $R$ be homogeneous Rota-Baxter operators on $A_{\omega}$. Then the map $f: \mathbb Z\rightarrow \mathbb F$ in Eq \eqref{eq:R0} satisfies  that for all  $l, m, n\in \mathbb Z$,
\\
$1)$ ~~  $af(2l+1)f(2m+1)=\big((a+1)f(2l+1)+(a+1)f(2m+1)$

$+f(2l+1)f(2m+1)+(a+1))\big) f(2l+2m+1), ~~ l\neq m$.
\\
$2)$~~ $-(a+1)f(2m+1)f(2n)$

$=\big(-af(2m+1)-af(2n)
+f(2m+1)f(2n)-a\big)f(2m+2n+1),~~  m\neq 0$.
\\
$3)$ ~~ $af(2l+1)f(2m)$

$=\big((a+1)f(2l+1)+(a+1)f(2m)+f(2l+1)f(2m)+(a+1)\big)f(2l+2m), ~~ m\neq 0
$.
\\
$4)$ ~~ $-(a+1)f(2m)f(2n)$

$=\big(-af(2m)-af(2n)
+f(2m)f(2n)-a\big)f(2m+2n), ~~ m\neq n
$.
\label{lem:fa}
\end{lemma}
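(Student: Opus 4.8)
The plan is to obtain all four identities as specializations of the two master equations \eqref{eq:odd} and \eqref{eq:even} of Lemma~\ref{lem:3.1}, evaluated at a distinguished index, using the standing hypothesis $f(0)=a$ together with the consequence $f(1)=-(a+1)$ of $f(0)+f(1)+1=0$. Since \eqref{eq:odd} and \eqref{eq:even} already encode the full Rota-Baxter condition (having been extracted via Lemma~\ref{lem:det}), no new structural information about $A_\omega$ enters: the entire content is to substitute a known value of $f$ at $0$ or at $1$ and collect terms.

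First I would derive the two ``$a$-coefficient'' identities by sending the even argument to zero. Setting $n=0$ in \eqref{eq:odd} replaces $f(2n)$ by $a$ and the argument $2l+2m+2n+1$ by $2l+2m+1$, while leaving the hypothesis $l\neq m$ intact; collecting $af(2l+1)+f(2l+1)=(a+1)f(2l+1)$, the symmetric term $(a+1)f(2m+1)$, and the constant $a+1$ then yields exactly the first identity. The same substitution $n=0$ in \eqref{eq:even}, under which the hypothesis $m\neq n$ becomes $m\neq 0$, produces the third identity after the identical regrouping.

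Next I would derive the remaining two identities by sending the first odd argument to zero, i.e.\ $l=0$, so that $f(2l+1)=f(1)=-(a+1)$. In \eqref{eq:odd} this turns the hypothesis $l\neq m$ into $m\neq 0$, and substituting $-(a+1)$ for each occurrence of $f(1)$ lets the linear pieces collapse as $-(a+1)f(2m+1)+f(2m+1)=-af(2m+1)$, $-(a+1)f(2n)+f(2n)=-af(2n)$, and $-(a+1)+1=-a$; this is precisely the second identity. The same substitution $l=0$ in \eqref{eq:even}, which leaves the hypothesis $m\neq n$ intact, gives the fourth identity.

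There is no genuine obstacle here: each identity is a one-line evaluation followed by collecting the coefficients of $a$. The only points demanding care are bookkeeping ones, namely verifying that setting $n=0$ (resp.\ $l=0$) converts the source hypotheses $l\neq m$ and $m\neq n$ into the stated constraints $l\neq m$, $m\neq 0$, and $m\neq n$, and performing the regrouping of the terms linear in $f$ consistently so that the uniform coefficient $(a+1)$ (in the first and third identities) or $-a$ (in the second and fourth) emerges.
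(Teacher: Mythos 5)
Your proposal is correct and is exactly the paper's (unstated) computation: the paper's proof simply says the result follows directly from Eqs.~\eqref{eq:odd} and \eqref{eq:even}, and your substitutions $n=0$ (giving 1) and 3)) and $l=0$ (giving 2) and 4)), together with $f(0)=a$ and $f(1)=-(a+1)$, are the intended specializations, with the hypotheses $l\neq m$, $m\neq 0$, $m\neq n$ tracked correctly.
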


\begin{proof}
The result follows from Eq \eqref{eq:odd} and Eq \eqref{eq:even}, directly.

\end{proof}

\begin{theorem}  Let $R$ be a homogeneous Rota-Baxter operators on $A_{\omega}$, then   the map $f: \mathbb Z\rightarrow \mathbb F$  in Eq \eqref{eq:R0}  satisfies equation,  for all  $m\in \mathbb Z$,
\begin{equation}
f(1-m)+f(m)+1=0.
\label{eq:f(m)}
\end{equation}
\label{thm:f(m)f(1-m)}
\end{theorem}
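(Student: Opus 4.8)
The plan is to prove the equivalent statement $f(2s)+f(1-2s)+1=0$ for every $s\in\mathbb Z$. This suffices because $m+(1-m)=1$ is odd, so the pair $\{m,1-m\}$ always consists of one even and one odd integer, and the asserted identity is symmetric under $m\leftrightarrow 1-m$; hence the even-argument instances already exhaust every instance, while the case $s=0$ is precisely the standing hypothesis $f(0)+f(1)+1=0$. Throughout I write $a=f(0)\neq 0$ and $f(1)=-(a+1)$.

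First I would exploit Lemma \ref{lem:fa}(3) with the substitution $l=-m$, which forces the third factor $f(2l+2m)=f(0)=a$. Writing $x=f(1-2m)$ and $y=f(2m)$ and cancelling the common factor $f(0)=a$ (legitimate since $a\neq 0$), the quadratic terms $xy$ on the two sides cancel and the identity collapses to
\[
(a+1)\bigl(f(1-2m)+f(2m)+1\bigr)=0,\qquad m\neq 0 .
\]
The same relation for odd arguments drops out of Lemma \ref{lem:fa}(2) with $n=-m$ (using $f(1)=-(a+1)$), so it is redundant and nothing extra is needed there.

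When $a\neq -1$ the factor $a+1$ is invertible, so we may divide it out and obtain $f(1-2m)+f(2m)+1=0$ for all $m\neq 0$; together with the $s=0$ case this settles the theorem in this range of $a$.

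The hard part is the degenerate case $a=-1$, i.e.\ $f(0)=-1$ and $f(1)=0$, where the displayed relation is vacuous and every convenient substitution ($l=-m$ in (3), $n=-m$ in (2), $l=0$, $n=0$) collapses because $f(0)+1=0$ and $f(1)=0$. Here the four identities of Lemma \ref{lem:fa} degenerate into factorised \emph{dichotomy} forms, among them
\[
f(2l+1)f(2m+1)\bigl(f(2l+2m+1)+1\bigr)=0,\qquad \bigl(f(2m)+1\bigr)\bigl(f(2n)+1\bigr)f(2m+2n)=0,
\]
together with the two mixed identities coupling an even index to an odd one. My plan for this case is to use the purely-even and purely-odd dichotomies to pin down the value patterns of $f$ on the even and on the odd integers separately, and then to use the two mixed identities, which are the only relations linking the even index $2s$ to the odd index $1-2s$, to transport that information across parity and force $f(2s)+f(1-2s)=-1$. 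I expect this parity-coupling step to be the main obstacle, precisely because the vanishing of $f(0)+1$ and of $f(1)$ removes all the easy substitutions and one must instead propagate the relation through generic triples; I would organise it by distinguishing whether $f(2)=0$ or $f(2)\neq 0$ and inducting outward from the known values $f(0)=-1$ and $f(1)=0$.
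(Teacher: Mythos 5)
Your reduction to even arguments and your computation for $a\neq -1$ are both correct, and in that range your route is essentially the paper's, only slightly streamlined: the paper adds items 2) and 3) of Lemma \ref{lem:fa} and then sets $n=-m$, which after the $f(2m+1)f(-2m)$ terms cancel leaves $2a(1+a)\bigl(f(2m+1)+f(-2m)+1\bigr)=0$, whereas you substitute $l=-m$ into item 3) alone and reach $(a+1)\bigl(f(1-2m)+f(2m)+1\bigr)=0$ directly. Both arguments then have to divide by $a+1$.

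The genuine gap is the case $a=-1$, which you correctly isolate but only outline. That outline cannot be completed, because the statement is false there. Take $f(2k)=-1$ for all $k\in\mathbb Z$, $f(3)=c$ with $c\neq 0$ arbitrary, and $f(2k+1)=0$ for all $k\neq 1$. One checks directly that Eq \eqref{eq:odd} and Eq \eqref{eq:even} hold: in Eq \eqref{eq:odd} the two odd indices are distinct, so at least one of $f(2l+1),f(2m+1)$ vanishes and both sides reduce to $0$; in Eq \eqref{eq:even} both sides equal $f(2l+1)$. So by Lemma \ref{lem:3.1} this $f$ defines a homogeneous Rota-Baxter operator with $f(0)=-1\neq 0$ and $f(0)+f(1)+1=0$, yet $f(3)+f(-2)+1=c\neq 0$. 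Hence no amount of propagating the degenerate dichotomies will yield Eq \eqref{eq:f(m)} when $a=-1$; the theorem needs the extra hypothesis $a\neq -1$ (equivalently $f(1)\neq 0$), or the $a=-1$ operators must be treated separately. Note that the paper's own proof has the same defect --- it divides by $2a(1+a)$ without comment --- so you have in fact put your finger on a real error, but as a proof of the theorem as stated your proposal is incomplete, and its final step is not merely hard but impossible.
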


\begin{proof}
By  2) and 3) in Lemma \ref{lem:fa}, for all $m, n\in \mathbb Z, m\neq 0, n\neq 0$,

 \vspace{2mm}$-f(2m+1)f(2n)$
 \\$=f(2m+2n+1)\big(-af(2m+1)-af(2n)+f(2m+1)f(2n)-a\big)
 $
 \\
 $+f(2m+2n)\big((a+1)f(2m+1)+(a+1)f(2n)+f(2m+1)f(2n)+a+1\big).$
\\Then in the case $m=-n$,  we obtain $f(2m+1)+f(-2m)+1=0$.
The result follows.
\end{proof}

\begin{theorem}  Let $R$ be a homogeneous Rota-Baxter operators on $A_{\omega}$,  and $f(2k)\neq 0, f(2l)\neq 0$,
$f(2m+1)\neq 0$, $f(2n+1)\neq 0$, for $k, l, m, n\in \mathbb Z$ and $klmn\neq 0$. Then we have

\vspace{2mm}$1)$~ $f(2k+2l)\neq 0$; \quad $2)$~ $f(2k+2m)\neq 0$;\quad $3)$~ $f(2k+2m+1)\neq 0$; ~

\vspace{2mm} $4)$ ~$f(2m+2n+1)\neq 0$;\quad $5)$ ~$f(2m+2n+2k+1)\neq 0$; \quad $6)$~ $f(2m+2k+2l)\neq 0$;

\vspace{2mm} $7)$ ~ $f(1-2k+2m)\neq 0$, ~~ $m\neq -k$; \quad $8)$ $f(4k)\neq 0$;~~
$9)$ $f(1-2k-2m)+1\neq 0$; ~~

\vspace{2mm} $10)$~  $f(2k-2m)+1\neq 0$; \quad
$11)$ $f(1-4k)+1\neq 0$.

\label{thm:klmn}

\end{theorem}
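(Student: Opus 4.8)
The plan is to read the four identities of Lemma~\ref{lem:fa} and the reflection identity $f(1-m)+f(m)+1=0$ of Theorem~\ref{thm:f(m)f(1-m)} (Eq.~\eqref{eq:f(m)}) contrapositively: in each case I assume the target value is $0$, substitute it into the one identity in which it appears as a factor, and read off a contradiction. The organizing observation is that the first and third identities of Lemma~\ref{lem:fa} carry the leading coefficient $a=f(0)$ on their left-hand sides, while the second and fourth carry $a+1$; since $a\neq0$ by hypothesis, the former give contradictions outright, whereas the latter do so only after the degenerate value $a=-1$ is excluded.

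First I dispose of the four two-term closures. Item 2) ($f(2k+2m)\neq0$, an even$+$odd sum landing on an even slot) goes through the third identity: taking its odd slot to be $2m+1$ and its even slot to be $2k$, vanishing of $f(2k+2m)$ would force the right side to vanish, hence $a\,f(2m+1)f(2k)=0$, contradicting $a\neq0$ and $f(2m+1)f(2k)\neq0$. Item 4) ($f(2m+2n+1)\neq0$, odd$+$odd) is identical through the first identity, for $m\neq n$. Items 1) (even$+$even) and 3) (even$+$odd on an odd slot) run the same argument through the fourth and second identities; there the forced relation is $(a+1)f(\cdot)f(\cdot)=0$, a contradiction precisely when $a\neq-1$.

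The three-term closures follow by chaining: 6) ($f(2m+2k+2l)\neq0$) is obtained by first producing the nonzero even value $f(2k+2l)$ from 1) and combining it with $f(2m+1)$ via 2); 5) ($f(2m+2n+2k+1)\neq0$) by first producing $f(2m+2n+1)$ from 4) and combining it with $f(2k)$ via 3). Item 7) ($f(1-2k+2m)\neq0$, $m\neq-k$) is the one place I use an identity in reverse: I apply the second identity with the index choice $2M+1=1-2k+2m$, $2N=2k$, so that the \emph{output} slot becomes the known nonzero value $f(2m+1)$ and the target sits in an input slot; vanishing of the target would then force the multiplying coefficient to vanish, which $a\neq0$ forbids. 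The excluded case $m=-k$ sends the target to $f(1-4k)$, which is governed by 11).

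Finally, the ``$+1\neq0$'' statements 9), 10), 11) require no fresh computation: the reflection identity identifies $f(1-2k-2m)+1$, $f(2k-2m)+1$, and $f(1-4k)+1$ with $-f(2k+2m)$, $-f(1-2k+2m)$, and $-f(4k)$, turning them respectively into 2), 7), 8). I expect the genuine obstacles to be exactly the two places where this clean mechanism breaks. The first is the diagonal item 8) ($f(4k)\neq0$): here the distinctness hypotheses $l\neq m$ and $m\neq n$ of the first and fourth identities fail, so one must first manufacture two \emph{distinct} nonzero indices summing to $4k$ (for instance by combining 2) and 7)) before applying a closure. The second is the degenerate value $a=-1$, where $a+1=0$ annihilates the leading term of the second and fourth identities and collapses the contrapositive for 1), 3) and their dependents 5), 6), 8), 9), 11); for this value I would restart from the specialized identities, in which the clean first and third identities force every admissible product to take the value $-1$, and re-derive the even-slot closures by a separate bootstrap through the reflection identity.
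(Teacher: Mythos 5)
Your overall mechanism is the paper's: read the four identities of Lemma \ref{lem:fa} contrapositively, chain the two-term closures to get the three-term ones, and convert the ``$+1\neq 0$'' items 9)--11) into 2), 7), 8) via Eq.~\eqref{eq:f(m)}. Your matching of identities to items is in fact more careful than the paper's own wording (which swaps the attributions of 2) and 3)), and you are right that items 1) and 3) only yield $(a+1)f(\cdot)f(\cdot)=0$, so that $a=-1$ is a degenerate case the paper passes over in silence. (For 5) and 6) the paper applies Eq.~\eqref{eq:odd} and Eq.~\eqref{eq:even} directly to the triple product, which avoids routing them through the $a\neq-1$ items as your chaining does.)

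However, your argument for item 7) does not close. Substituting $f(2M+1)=f(1-2k+2m)=0$ and $2N=2k$ into identity 2) of Lemma \ref{lem:fa} gives
\[
0=\bigl(-af(2k)-a\bigr)f(2m+1)=-a\bigl(f(2k)+1\bigr)f(2m+1),
\]
so $a\neq 0$ and $f(2m+1)\neq 0$ force only $f(2k)=-1$, not a contradiction; the claim that ``$a\neq0$ forbids'' the coefficient from vanishing is wrong. You still need $f(2k)\neq -1$, which is not among items 1)--6): item 10) controls $f(2k-2m)$, not $f(2k)$, and is itself derived from 7). (The paper's own route for 7), identity 1) applied to $f(2m+1)$ and $f(1-2k)$, needs exactly the same missing input, since $f(1-2k)\neq0\Leftrightarrow f(2k)\neq-1$.) Likewise, for item 8) your plan of ``combining 2) and 7)'' produces nonzero values at the indices $2(k+m)$ and $2(m-k)+1$, whose admissible closures land at $4m$, $2m$, or $4k+2m$, never at $4k$; and the $a=-1$ branch of items 1), 3) and their dependents is only announced, not carried out. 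As written the proposal establishes 2), 4), 9), and, when $a\neq-1$, 1), 3), 5), 6), but leaves 7), 8), 10), 11) and the degenerate case genuinely open.
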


\begin{proof} The result 1) follows from  4)
in Lemma \ref{lem:fa} of the case ~$m=k$, $n=l$, $k\neq l$.

  The result 2) follows from 2) in Lemma \ref{lem:fa} of the case $m=m$,  $n=k$,  $k\neq 0$.

  The result 3) follows from  3) in Lemma \ref{lem:fa} of the case $l=m$, $m=k$,  $m\neq 0$.

The result 4) follows from  1) in Lemma \ref{lem:fa} of the case $l=m$, $m=n$,  $m\neq n$.

The result 5) and 6) follows from  Eq \eqref{eq:odd} and Eq \eqref{eq:even}, directly.

The result 7) follows from  1) in Lemma \ref{lem:fa} of the case  $l=0$, $2m+1, -2k+1$,  $m\neq -k$.

 The result 8) follows from  3) in Lemma \ref{lem:fa} of the case  $l=k$, $m=k$, $k\neq 0$.

 The result 9),  10)  and 11) follow from  2), 7) and 10) and Eq \eqref{eq:f(m)}, respectively.

\end{proof}

\begin{lemma} If at least one of the subsets $W_i, U_i$, $i=1, 2$ is finite. Then $R$ is not a  homogeneous Rota-Baxter operator on $A_{\omega}$.
\label{lem:finite}
\end{lemma}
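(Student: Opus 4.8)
The plan is to argue by contradiction: assume $R$ is a homogeneous Rota-Baxter operator on $A_\omega$ in the present regime ($a=f(0)\neq 0$ and $f(0)+f(1)+1=0$) for which at least one of $W_1,W_2,U_1,U_2$ is finite, and try to force $a=0$. The two engines are the closure relations of Theorem \ref{thm:klmn}, which say that the support of $f$ on each parity class is stable under suitable additions, and the reflection identity $f(1-m)+f(m)+1=0$ of Theorem \ref{thm:f(m)f(1-m)}, whose underlying involution $m\mapsto 1-m$ interchanges the two parity classes and trades the zero locus of $f$ for its $(-1)$-locus.

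First I would dispose of $W_1$ and $U_1$. By parts 1) and 8) of Theorem \ref{thm:klmn} the set $E=W_1\cup\{0\}=\{2k : f(2k)\neq 0\}$ is closed under addition; since a finite additive subsemigroup of $2\mathbb Z$ containing a nonzero element already contains all of its multiples, finiteness of $W_1$ forces $W_1=\varnothing$. If moreover some $2m_0+1\in U_1$ exists, then part 3) of Theorem \ref{thm:klmn} produces $2(k+m_0)+1\in U_1$ for every $2k\in W_1$, so once $W_1$ is infinite $U_1$ is infinite as well; hence finiteness of $U_1$ (with $W_1$ infinite) likewise forces $U_1=\varnothing$. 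Thus finiteness of $W_1$ or $U_1$ collapses one entire parity class of $f$ onto the two exceptional indices $0,1$.

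Next I would transport finiteness of $W_2$ or $U_2$ into the same situation via the reflection identity. Writing $W_1'=\{2m : m\neq 0,\ f(2m)=-1\}$ and $U_1'=\{2l+1 : l\neq 0,\ f(2l+1)=-1\}$, the involution $m\mapsto 1-m$ gives bijections $W_2\leftrightarrow U_1'$ and $U_2\leftrightarrow W_1'$, so finiteness of $W_2$ (resp. $U_2$) means only finitely many odd (resp. even) indices carry the value $-1$. Because $f(0)+f(1)+1=0$ kills the dichotomy $f(m)(f(m)+1)=0$ coming from Eq \eqref{eq:3.7}, $f$ need not be $\{0,-1\}$-valued in this regime, so the $(-1)$-loci genuinely differ from the full supports; this is exactly the role of parts 9)--11) of Theorem \ref{thm:klmn}, which certify that the relevant sums avoid the value $-1$. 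Combining these refinements with the closure of the support should again force one parity class of $f$ to collapse onto $\{0,1\}$, reducing all four finiteness hypotheses (after at most one application of the reflection) to a single degenerate profile.

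The final step is then to substitute the degenerate profile into Lemma \ref{lem:fa} at indices whose sum lands on $0$ or $1$ (e.g. taking $n=-m$), which collapses the relations to one scalar equation in $a$. For the profile $W_1=\varnothing$ (so $f(2m)=0$, $f(2l+1)=-1$ off the exceptional indices) part 1) of Lemma \ref{lem:fa} reads $a=a+a^2$, i.e. $a^2=0$, contradicting $a\neq 0$. The main obstacle I anticipate is the mirror profile $U_1=\varnothing$ (so $f(2m+1)=0$, $f(2m)=-1$ off $0,1$): here part 4) of Lemma \ref{lem:fa} at $n=-m$ yields $(a+1)^2=0$, i.e. $a=-1$, which does \emph{not} contradict $a\neq 0$, and at $a=-1$ one has $f(1)=0$, so $f$ becomes the two-valued map $f(\mathrm{even})=-1$, $f(\mathrm{odd})=0$. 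Since this configuration has no nonzero odd values, the closure machinery of Theorem \ref{thm:klmn} is vacuous on it, and a direct check of Eq \eqref{eq:odd} and Eq \eqref{eq:even} shows it actually satisfies both; thus this boundary case is the genuine crux, and it can be ruled out here only by observing that it already appears in the earlier classification (case 4) of Theorem \ref{thm:fin}) and so is not a new operator in the present regime. Handling this exceptional alignment of $f(0),f(1)$ with the surviving parity class, rather than the generic closure argument, is where the real work lies.
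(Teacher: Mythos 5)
There is a genuine gap, and it sits exactly where you yourself flagged trouble. The decisive problem is that the step ``combining these refinements with the closure of the support should again force one parity class of $f$ to collapse onto $\{0,1\}$'' cannot be carried out, because the lemma's conclusion actually fails when $W_2$ or $U_2$ is finite. Take $f(2m)=a$ and $f(2m+1)=-1-a$ for all $m\in\mathbb Z$ (this is case 1) of Theorem \ref{thm:f(0)a} with $m_0=\pm1$). Substituting into Eq \eqref{eq:odd} gives $a(1+a)^2$ on both sides and into Eq \eqref{eq:even} gives $-a^2(1+a)$ on both sides, so by Lemma \ref{lem:3.1} this $R$ is a homogeneous Rota-Baxter operator; it lies in the present regime ($f(0)=a\neq 0$, $f(0)+f(1)+1=0$) and has $W_2=U_2=\emptyset$, both finite. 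Your own boundary case is the instance $a=-1$ of this family: $f(2m)=-1$, $f(2m+1)=0$ for all $m$ satisfies Eq \eqref{eq:odd} and Eq \eqref{eq:even}, has $f(0)=-1\neq 0$ and $f(0)+f(1)+1=0$, and has $U_1=W_2=\emptyset$. You correctly computed that this configuration survives every constraint, but your proposed resolution --- that it ``already appears in case 4) of Theorem \ref{thm:fin} and so is not a new operator'' --- is not a step available in a proof of this lemma: the lemma asserts that no homogeneous Rota-Baxter operator in this regime has a finite $W_i$ or $U_i$, and an operator that does is a counterexample whether or not another section of the paper also lists it. (There is also a small slip: for the profile $U_1=\emptyset$ the relation forcing $(a+1)^2=0$ is part 1) of Lemma \ref{lem:fa} with $m=-l$, not part 4).)

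The parts of your argument that do work --- closure of $W_1\cup\{0\}$ and of $U_1$ under addition via 1)--4) and 8) of Theorem \ref{thm:klmn}, hence ``finite implies empty'' for $W_1$ and $U_1$, and the contradiction $a^2=0$ when $W_1=\emptyset$ --- are essentially all that the paper itself extracts, since its proof is the single line ``follows from 1), 2), 3) and 4) in Theorem \ref{thm:klmn}.'' Those items only control the supports $W_1,U_1$ and say nothing about $W_2,U_2$, so the paper's proof does not close the gap either; your analysis is in fact more careful than the paper's and has uncovered that the lemma is false as stated. What can be salvaged, and what Theorem \ref{thm:infinite} actually uses, is the weaker assertion that $W_1$ and $U_1$ are each either empty or infinite, together with an explicit identification of the exceptional profiles $f(2m)\equiv a$, $f(2m+1)\equiv -1-a$ that make some $W_i$ or $U_i$ empty.
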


\begin{proof}
The result follows from 1), 2),  3) and 4) in Theorem \ref{thm:klmn}, directly.

\end{proof}

\begin{theorem} If  $R$ is a  homogeneous Rota-Baxter operator on $A_{\omega}$, then
$$\inf W_i=\inf U_i=-\infty, ~~\sup W_i=\sup U_i=+\infty,$$
\\and there is
$m_0\in \mathbb Z$, $m_0\neq 0$ such that
\begin{equation}
W_1=\{2m_0k | k  \in \mathbb Z, k\neq 0\}, ~~ U_1=\{2m_0k +1 | k \in \mathbb Z\}.
\label{eq:infm0}
\end{equation}
\label{thm:infinite}
\end{theorem}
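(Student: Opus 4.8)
The plan is to pass to the index sets
\[
E=\{\,m\in\mathbb Z : f(2m)\neq 0\,\},\qquad O=\{\,m\in\mathbb Z : f(2m+1)\neq 0\,\},
\]
for which $W_1=\{2m : m\in E,\ m\neq 0\}$ and $U_1=\{2m+1 : m\in O\}$, and to show that both $E$ and $O$ equal one and the same cyclic subgroup $m_0\mathbb Z$ of $\mathbb Z$ with $|m_0|\geq 2$. This is exactly \eqref{eq:infm0}, and since a nonzero cyclic subgroup of $\mathbb Z$ together with its complementary residue classes are all unbounded above and below, the statements $\inf W_i=\inf U_i=-\infty$ and $\sup W_i=\sup U_i=+\infty$ follow. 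Note that $f(0)=a\neq 0$ gives $0\in E$ from the start.

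First I would extract the purely additive structure. By Lemma~\ref{lem:finite} none of $W_i,U_i$ is finite, so $E$ and $O$ are infinite; in particular $E\neq\{0\}$, so the subgroup $\langle E\rangle=d\mathbb Z$ it generates has $d\geq 1$. Written in index form, parts 1)--4) of Theorem~\ref{thm:klmn} give $E+E\subseteq E$, $E+O\subseteq E$, $E+O\subseteq O$ and $O+O\subseteq O$, while part 7) gives $O-E\subseteq O$; hence $E$ is a sub-semigroup containing $0$, and $O$ is invariant under translation by $\pm E$, i.e.\ under all of $d\mathbb Z$. Thus $O$ is a nonempty union of cosets of $d\mathbb Z$ and is therefore unbounded in both directions; fixing $e_0\in E$ with $e_0\neq 0$, the inclusion $e_0+O\subseteq E$ (part 2)) then shows $E$ is unbounded in both directions as well. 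This already yields $\inf W_i=\inf U_i=-\infty$ and $\sup W_i=\sup U_i=+\infty$ for $i=1$, the corresponding statements for $i=2$ being a consequence of the cyclic description proved next.

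It remains to identify $E=O=d\mathbb Z$, and here the involution behind Theorem~\ref{thm:f(m)f(1-m)} is decisive. Equation \eqref{eq:f(m)} reads $f(1-m)+f(m)+1=0$, equivalently $f(2(-p)+1)=-1-f(2p)$ and $f(2(-p))=-1-f(2p+1)$; so the map $p\mapsto -p$ carries non-vanishing of $f$ back and forth between $E$ and $O$ \emph{unless} the relevant value equals $-1$. Running an index $o_1\in O$ through the second identity shows that either $-o_1\in E$ (so $o_1\in d\mathbb Z$, once one checks $E=d\mathbb Z$ by the same device applied to $E$) or else $f(2o_1+1)=-1$; consequently any coset of $d\mathbb Z$ meeting $O$ outside $d\mathbb Z$ would have to consist entirely of arguments where $f=-1$. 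The plan is to exclude this using the non-degeneracy statements 9)--11) of Theorem~\ref{thm:klmn}, which assert $f\neq -1$ at $1-2k-2m$, $2k-2m$ and $1-4k$, in combination with parts 1) and 4) of Lemma~\ref{lem:fa}, which propagate a value of $-1$ additively along a coset; together these should collapse $O$ onto the single coset $d\mathbb Z$, giving $E=O=d\mathbb Z$. Setting $m_0=d$ and using that $W_2$ is infinite (so $d\neq 1$) completes the identification \eqref{eq:infm0}.

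The genuine difficulty lies in this last step. All of the closure relations in Theorem~\ref{thm:klmn} only pass one nonzero value of $f$ to another, while every use of \eqref{eq:f(m)} can destroy non-vanishing exactly at the value $-1$; the careful part is therefore to organize the applications of parts 7), 9), 10) and 11) so that each appeal to the involution is shielded by a guaranteed $f\neq -1$, thereby annihilating all nonzero cosets of $O$ and simultaneously forcing $E-E\subseteq E$. A preliminary case to dispose of is the degenerate subcase $a=-1$, where $f(1)=0$ and $0\notin O$ and the identities of Lemma~\ref{lem:fa} lose their leading terms; this must be treated on its own before the uniform argument above applies.
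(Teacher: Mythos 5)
Your reduction to the index sets $E=\{m\in\mathbb Z: f(2m)\neq 0\}$ and $O=\{m\in\mathbb Z: f(2m+1)\neq 0\}$, together with the closure relations $E+E\subseteq E$, $E+O\subseteq E\cap O$, $O+O\subseteq O$, $O-E\subseteq O$ read off from Theorem \ref{thm:klmn}, is correct, and your derivation of two-sided unboundedness of $W_1,U_1$ (invariance of $O$ under translation by the subgroup $d\mathbb Z$ generated by $E$, then $E\supseteq e_0+O$) is a workable alternative to the paper's direct contradiction argument. The genuine gap is the step that actually constitutes the theorem, namely the exact identification $E=O=d\mathbb Z$: you do not prove it, you only announce a plan, and the plan as sketched is circular. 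To pass from $o\in O$ to $-o\in E$ via the involution \eqref{eq:f(m)} you must first know $f(2o+1)\neq -1$, and to get $E=-E$ you need $f(2e)\neq -1$ on all of $E$; but items 9)--11) of Theorem \ref{thm:klmn} only exclude the value $-1$ at arguments of the special shapes $1-2k-2m$, $2k-2m$, $1-4k$, and upgrading these to ``$f\neq -1$ on all of $W_1\cup U_1$'' is exactly the content of Corollary \ref{cor:1}, which the paper deduces \emph{from} Theorem \ref{thm:infinite}. As organized, your argument needs the corollary in order to prove the theorem on which the corollary depends.

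The gap is repairable, and the repair shows that the involution is not needed at all; this is in essence what the paper does, phrased with ordered enumerations rather than semigroups. Once $E$ is a sub-semigroup of $\mathbb Z$ containing $0$ and unbounded in both directions, let $p$ be its least positive and $q$ its greatest negative element; then $p+q\in E$ by 1) of Theorem \ref{thm:klmn} and $q<p+q<p$ forces $p+q=0$, after which induction along the ordered positive (and negative) elements of $E$ (each $m_{i+1}-p$ is a positive element of $E$ smaller than $m_{i+1}$) yields $E=p\mathbb Z$. The same betweenness argument applied to $O$ via $O+O\subseteq O$ forces its least positive and greatest negative elements to be opposite; in particular $0\in O$, i.e.\ $f(1)\neq 0$, so the ``degenerate subcase $a=-1$'' you set aside cannot occur and needs no separate treatment, and $O=l_0\mathbb Z$. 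Finally $l_0+p\in E\cap O$ with $l_0,p>0$ gives $l_0+p=sp=tl_0$ with $s,t\geq 2$, hence $l_0=p=:m_0$, and $m_0\neq\pm 1$ because $W_2$ is infinite by Lemma \ref{lem:finite}.
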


\begin{proof}
If there is $m_0\in \mathbb Z$ such that $f(2m_0)\neq 0$, and for all $2m\in W_1$, $2m\geq 2m_0$ (similar discussion for the case $2m\leq 2m_0$ ). By 2) and 8) in Theorem \ref{thm:klmn}, and Lemma \ref{lem:finite},   for all $2m+1\in U_1$, we have $f(2m+2m_0)\neq 0$, and $f(4m_0)\neq 0$. Then $4m_0 > 2m_0$, $2m+2m_0\geq 2m_0$. We obtain that  $m_0>0$, and there is $l_0\in \mathbb Z$, $l_0> 0$  such that for all $2l+1\in U_1$, $2l+1\geq 2l_0+1.$
From 7) in Theorem \ref{thm:klmn}, $f(1+2l_0-2m_0)\neq 0$. We get the contradiction  $2l_0+1\leq 1+ 2l_0-2m_0 < 1+ 2l_0$.
Therefore, $\inf W_i=\inf U_i=-\infty, ~~\sup W_i=\sup U_i=+\infty.$

Then we can suppose $$W_{1}=\{ 2m_i, 2m_i' | i\in \mathbb Z, i \geq 0\}, ~ U_{1}=\{ 2l+1, 2l'_{i}+1| i\in \mathbb Z, i\geq 0\},$$
where
$$ \cdots < m'_{i+1} < m'_i < \cdots < m'_1 < m'_0 <0 < m_0 < m_1<\cdots < m_i < m_{i+1} < \cdots,$$
$$ \cdots <  l'_{i+1} < l'_i < \cdots < l'_1 < l'_0 < 0 <l_0 < l_1 <\cdots < l_i < l_{i+1} < \cdots,$$

Thanks to Theorem \ref{thm:klmn},  $2m_0+2m_0'\in W_1$, $m_0'< m_0+m_0'< m_0$. We obtain $m_0'=-m_0$.

Since $0<2m_1+m_0'=2m_1-2m_0< 2m_1$, $m_1=2m_0$. Inductively, we get
$$m_i=(i+1)m_0, ~~ m_i'=-(i+1)m_0, ~~ i\in \mathbb Z, i\geq 0.$$

Similar discussion, for all $i\in \mathbb Z, i\geq 0$, $l_i=(i+1)l_0$ and $l_i'=-(i+1)l_0$.

From  2) and 3) in Theorem \ref{thm:klmn}, there exist positive $s, t\in \mathbb Z$ such that
$$2l_0+2m_0=2s m_0=2tl_0. $$
Then $l_0=(s-1)m_0, m_0=(t-1)l_0$. It follows  $l_0=m_0$.  The proof is complete.
\end{proof}

The subset  $T_{m_0}=W_1\cup U_1$ is called the {\bf $m_0$-supporter of the
  homogeneous Rota-Baxter operator $R$.} Then  for all $m\in \mathbb Z, m\neq 0, 1$, $f(m)\neq 0$ if and only if $m\in T_{m_0}$.

\begin{coro}
Let ~$R$ be a  homogeneous Rota-Baxter operator. Then the map $f:\mathbb Z\rightarrow \mathbb F$ in Eq \eqref{eq:R0} satisfies that for $k\in \mathbb Z$, if
$f(2m_0k)\neq 0$, then $f(2km_0)\neq -1, $  $f(1+2km_0)\neq 0, -1, $ and
\begin{equation}
\frac{1}{f(2m_{0}k)}+\frac{1}{f(-2m_{0}k)}+\frac{1}{f(2m_{0}k)f(-2m_{0}k)}=\frac{1+2a}{a^{2}},
\label{eq:00}
\end{equation}
where $f(0)=a\neq 0.$
\label{cor:1}
\end{coro}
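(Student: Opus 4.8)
The plan is to work entirely inside the framework supplied by Theorem~\ref{thm:infinite}: here $f(0)=a\neq 0$, $f(0)+f(1)+1=0$, and by \eqref{eq:infm0} the sets $W_1=\{2km_0\mid k\in\mathbb Z,\ k\neq 0\}$ and $U_1=\{2km_0+1\mid k\in\mathbb Z\}$ record precisely the indices $m\neq 0,1$ at which $f(m)\neq 0$. The heart of the corollary is the identity \eqref{eq:00}, and I would obtain it by specializing relation $4)$ of Lemma~\ref{lem:fa}. Taking $m=km_0$ and $n=-km_0$ (admissible since $m\neq n$ whenever $k\neq 0$) makes $f(2m+2n)=f(0)=a$, so with the shorthand $x=f(2km_0)$ and $y=f(-2km_0)$ relation $4)$ reads $-(a+1)xy=\big(-ax-ay+xy-a\big)a$. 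Expanding and collecting terms collapses this to $(2a+1)xy=a^2(x+y+1)$. Since $2km_0$ and $-2km_0=2(-k)m_0$ both lie in $W_1$ for $k\neq 0$, we have $x\neq 0$ and $y\neq 0$, so dividing through by $xy$ produces exactly \eqref{eq:00}.

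Next I would dispatch the nonvanishing and non-$(-1)$ assertions. The statement $f(1+2km_0)\neq 0$ is immediate, since $1+2km_0\in U_1$ by \eqref{eq:infm0}. For the two assertions of the form $f(\,\cdot\,)+1\neq 0$ I would invoke the catalog of Theorem~\ref{thm:klmn}: part $10)$ (which records $f(2K-2M)+1\neq 0$ at even arguments, via part $7)$ and \eqref{eq:f(m)}) gives $f(2km_0)\neq -1$ once $2km_0$ is written as a difference $2K-2M$ of two elements of $W_1$; likewise parts $9)$ and $11)$ (the corresponding $\neq -1$ assertions at odd arguments) give $f(1+2km_0)\neq -1$ once $1+2km_0$ is written as $1$ minus a sum of two elements of $W_1$. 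The only care needed is to choose the auxiliary multiples of $m_0$ so that the nonvanishing and distinctness side conditions governing Lemma~\ref{lem:fa} (namely $m\neq 0$, $l\neq m$, $m\neq n$) are met; since $W_1,U_1$ are the full arithmetic progressions of \eqref{eq:infm0}, such a choice is always available.

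Finally I would treat the degenerate index $k=0$ together with a consistency check. For $k=0$ the left side of \eqref{eq:00} is $\tfrac{1}{a}+\tfrac{1}{a}+\tfrac{1}{a^2}=\tfrac{2a+1}{a^2}$, so the identity is automatic, while the inequalities reduce to $f(0)=a\neq -1$ and $f(1)=-(a+1)\neq 0,-1$, i.e.\ to $a\neq -1$. I would establish $a\neq -1$ by feeding the $k=1$ instance of the identity back in: if $a=-1$ then $(2a+1)xy=a^2(x+y+1)$ becomes $(x+1)(y+1)=0$, forcing $f(2m_0)=-1$ or $f(-2m_0)=-1$, which contradicts the $\neq -1$ conclusions already obtained in the previous paragraph for the elements of $W_1$. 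Note that this argument is not circular: the $\neq -1$ facts are produced from Theorem~\ref{thm:klmn} without using \eqref{eq:00} or the value of $a$ beyond $a\neq 0$.

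I expect the only genuine obstacle to be the bookkeeping in the second paragraph: confirming that each target argument ($2km_0$ and $1+2km_0$) admits a representation as the appropriate difference or combination of elements of $W_1$ while respecting the distinctness conditions, so that the relevant instances of Lemma~\ref{lem:fa}, and hence the relevant parts of Theorem~\ref{thm:klmn}, actually apply. The algebra leading to \eqref{eq:00} itself is purely routine.
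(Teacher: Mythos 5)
Your proposal is correct and follows essentially the same route as the paper: the identity \eqref{eq:00} is obtained by specializing relation $4)$ of Lemma~\ref{lem:fa} at $m=-n=km_{0}$ (so that $f(2m+2n)=f(0)=a$) and then dividing by $a^{2}f(2m_{0}k)f(-2m_{0}k)$, while the claims $f(2km_{0})\neq-1$ and $f(1+2km_{0})\neq 0,-1$ are read off from parts $9)$ and $10)$ of Theorem~\ref{thm:klmn} together with Theorem~\ref{thm:infinite}. Your additional handling of the degenerate index $k=0$ and the resulting deduction $a\neq-1$ is a point the paper passes over silently, but it does not alter the method.
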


\begin{proof} From 9) and 10) in Theorem \ref{thm:klmn}, if $f(2m_0k)\neq 0$, then $f(2km_0)\neq -1, $  $f(1+2km_0)\neq 0, -1.$
Thanks to  4) in Lemma \ref{lem:fa}, for
$m=-n=2m_{0}k$, ~$k\in \mathbb Z$, $k\neq 0$, we have

$-(1+a)f(2m_{0}k)f(-2m_{0}k)$
$=-a^{2}f(2m_{0}k)-a^{2}f(-2m_{0}k)$ $+af(2m_{0}k)f(-2m_{0}k)-a^{2}.$
\\Since $\frac{1}{f(0)}=\frac{1}{a}$, we obtain Eq \eqref{eq:00}.
\end{proof}

\begin{coro}
Let ~$R$ be a  homogeneous Rota-Baxter operator with $m_0$-supporter $T_{m_0}$. Then the map $f:\mathbb Z\rightarrow \mathbb F$ in Eq \eqref{eq:R0} satisfies that for all $k_1, k_2, k_3\in \mathbb Z$, $k_2\neq k_3$,
\begin{equation}
\frac{1}{f(2m_{0}k_{1})}+\frac{1}{f(2m_{0}k_{1})f(2m_{0}(-k_{1}+k_{2}+k_{3}))}+\frac{1}{f(2m_{0}(-k_{1}+k_{2}+k_{3}))}
\label{eq:01}
\end{equation}
$$=\frac{1}{f(2m_{0}k_{2})}
+\frac{1}{f(2m_{0}k_{3})f(2m_{0}k_{2})}+\frac{1}{f(2m_{0}k_{3})}.
$$
\label{cor:2}
\end{coro}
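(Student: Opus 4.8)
The plan is to linearize part 4) of Lemma~\ref{lem:fa} into a single multiplicative identity, and then read off \eqref{eq:01} as the equality of two products having the same "total degree." By Theorem~\ref{thm:infinite} we have $f(2m_0k)\neq 0$ for every $k\in\mathbb Z$ (with $f(0)=a$), so the quantities
\[
\psi_k:=1+\frac{1}{f(2m_0 k)}
\]
are well defined for all $k\in\mathbb Z$, and in particular $\psi_0=\tfrac{a+1}{a}$.

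First I would substitute $2m=2m_0p$, $2n=2m_0q$ (so $m=m_0p$, $n=m_0q$ with $p\neq q$) into part 4) of Lemma~\ref{lem:fa}, divide through by $f(2m_0p)\,f(2m_0q)\,f(2m_0(p+q))$ and add $1$ to both sides. A routine rearrangement, using $(1+\tfrac1x)(1+\tfrac1y)=1+\tfrac1x+\tfrac1y+\tfrac1{xy}$, collapses the relation to the clean form
\[
\psi_p\,\psi_q=\frac{a+1}{a}\,\psi_{p+q},\qquad p\neq q.
\]
The value $\psi_0=\tfrac{a+1}{a}$ is just this identity read off when $p+q=0$, and it reproduces Eq.~\eqref{eq:00} of Corollary~\ref{cor:1}.

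Next I would rewrite the two sides of \eqref{eq:01}. Adding $1$ to each side factors it: the left-hand side of \eqref{eq:01} plus $1$ equals $\psi_{k_1}\,\psi_{-k_1+k_2+k_3}$, while the right-hand side plus $1$ equals $\psi_{k_2}\,\psi_{k_3}$. Since $k_1+(-k_1+k_2+k_3)=k_2+k_3$, the two index pairs have the same sum, so applying the displayed multiplicative identity to each product gives $\tfrac{a+1}{a}\,\psi_{k_2+k_3}$ in both cases; subtracting $1$ yields \eqref{eq:01}.

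The one point requiring care — and the main obstacle — is that the multiplicative identity was established only for $p\neq q$, whereas the left-hand product $\psi_{k_1}\,\psi_{-k_1+k_2+k_3}$ is diagonal precisely when $2k_1=k_2+k_3$. To cover this I would prove the diagonal relation $\psi_j^{\,2}=\tfrac{a+1}{a}\,\psi_{2j}$ separately. Writing $\psi_{2j}=\tfrac{a}{a+1}\,\psi_{j-1}\psi_{j+1}$ (legitimate since $j-1\neq j+1$) and expanding $\psi_{j-1}=\tfrac{a}{a+1}\psi_j\psi_{-1}$ and $\psi_{j+1}=\tfrac{a}{a+1}\psi_j\psi_1$ via the off-diagonal identity (using $j\neq\pm1$), together with $\psi_{-1}\psi_1=\tfrac{a+1}{a}\psi_0=\bigl(\tfrac{a+1}{a}\bigr)^2$, everything collapses to $\psi_j^{\,2}=\tfrac{a+1}{a}\psi_{2j}$. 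The residual values $j=\pm1$ are handled by an alternate split, e.g.\ $\psi_3\psi_{-1}=\tfrac{a+1}{a}\psi_2$ expanded through $\psi_3=\tfrac{a}{a+1}\psi_1\psi_2$ and $\psi_{-1}=\tfrac{a}{a+1}\psi_1\psi_{-2}$ with $\psi_2\psi_{-2}=\bigl(\tfrac{a+1}{a}\bigr)^2$. With the multiplicative identity thereby valid for \emph{all} $p,q\in\mathbb Z$, the argument of the previous paragraph applies verbatim and completes the proof.
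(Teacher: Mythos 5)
Your proof is correct, and it is in substance the argument the paper only gestures at: the paper's own ``proof'' is a bare citation of items 9) and 10) of Theorem \ref{thm:klmn} and of Theorem \ref{thm:f(m)f(1-m)}, whereas the actual computation in either version is to divide relation 4) of Lemma \ref{lem:fa} by $f(2m_0p)f(2m_0q)f(2m_0(p+q))$ and observe that $\frac{1}{x}+\frac{1}{y}+\frac{1}{xy}$ then depends only on $p+q$. Your reformulation via $\psi_k=1+\frac{1}{f(2m_0k)}$ packages this as $\psi_p\psi_q=\frac{a+1}{a}\,\psi_{p+q}$ and exhibits Eq.~\eqref{eq:00} of Corollary \ref{cor:1} as the special case $q=-p$, which is cleaner bookkeeping. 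More importantly, you identify and close a gap the paper passes over in silence: Lemma \ref{lem:fa} 4) is stated only for $m\neq n$, so the basic identity is a priori unavailable on the left-hand side of \eqref{eq:01} precisely when $2k_1=k_2+k_3$, and your separate derivation of the diagonal relation $\psi_j^{\,2}=\frac{a+1}{a}\,\psi_{2j}$ (with the alternate splits for $j=\pm1$) is exactly what is needed there. Two small points should be made explicit: each division by $\frac{a+1}{a}$ requires $a\neq-1$, which follows from item 10) of Theorem \ref{thm:klmn} (equivalently, Corollary \ref{cor:1} already records $f(2km_0)\neq-1$); and the nonvanishing of every $f(2m_0k)$, which you do invoke, is supplied by Theorem \ref{thm:infinite} together with $f(0)=a\neq0$.
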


\begin{proof}

The result follows from 9) and 10) in Theorem \ref{thm:klmn}, and  Theorem \ref{thm:f(m)f(1-m)}.
\end{proof}

\begin{theorem}
Let $R$ be a  homogeneous Rota-Baxter operator on $A_{\omega}$. Then there is $m_0\in \mathbb Z,~~  m_0\neq 0$ such that the map $f:\mathbb Z\rightarrow \mathbb F$ in Eq \eqref{eq:R0} satisfies
one of the two cases:

\vspace{2mm}$1)$  For all $ k\in \mathbb Z, $ $f(2m_0k)=f(0)=a, ~~f(2m_0k+1)=f(1)=-1-a, $
 and $f(m)=0$ for the remaining $m\in \mathbb Z$.

\vspace{2mm}$2)$ ~~ If there is $k_0\in \mathbb Z, k_0\neq 0$ such that $f(2m_0k_0)\neq a$, then  $a\neq -1, -\frac{1}{2}$, and for all $k\in \mathbb Z,$

$$f(4m_0k)=a, ~~ f(4m_0k+1)=-1-a,$$

\begin{equation}   f(4m_0k+2)=\frac{-a}{1+2a}, f(4m_0k+3)=-\frac{1+a}{1+2a},
\label{eq:f(2m0k)a}
\end{equation}
 and $f(m)=0$ for the remaining $m\in \mathbb Z$.

\label{thm:f(0)a}
\end{theorem}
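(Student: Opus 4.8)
The plan is to leverage the support information already established and reduce the problem to determining a single sequence of values, and then to exploit the quadratic functional relations of Corollaries \ref{cor:1} and \ref{cor:2}.

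First I would fix the structure of the support. By Theorem \ref{thm:infinite} there is $m_0\neq 0$ with $W_1=\{2m_0k\mid k\neq 0\}$ and $U_1=\{2m_0k+1\mid k\in\mathbb Z\}$, so outside the two residue classes $0,1\pmod{2m_0}$ the map $f$ vanishes, while $f(0)=a$ and $f(1)=-1-a$. Next, Theorem \ref{thm:f(m)f(1-m)} gives $f(1-m)=-1-f(m)$ for every $m$; taking $m=2m_0k$ and using $1-2m_0k=2m_0(-k)+1$ shows $f(2m_0k+1)=-1-f(-2m_0k)$. Hence the odd-indexed values are completely determined by the even-indexed ones, and the whole problem collapses to identifying the sequence $g(k):=f(2m_0k)$, which satisfies $g(0)=a\neq 0$ and, by Corollary \ref{cor:1}, $g(k)\neq 0,-1$ for all $k$.

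The key device is the substitution $\psi(k):=1+g(k)^{-1}$, which is defined and nonzero for every $k$ and has $\psi(0)=(a+1)/a$. A short computation rewrites the three-term expression on each side of Eq \eqref{eq:01} as $\psi\,\psi-1$, so Corollary \ref{cor:2} becomes the clean multiplicative identity $\psi(k_1)\psi(-k_1+k_2+k_3)=\psi(k_2)\psi(k_3)$ for $k_2\neq k_3$. Specializing $(k_1,k_2,k_3)=(k,k-1,k+1)$ yields $\psi(k)^2=\psi(k-1)\psi(k+1)$ for all $k$, i.e. $\psi(k+1)/\psi(k)$ is a constant $r$, whence $\psi(k)=\psi(0)\,r^{k}$ and $g(k)=a/((a+1)r^{k}-a)$. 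Thus $f$ is governed by the single ratio $r$, and I would separate cases on whether $\psi$ is constant: if $r=1$ then $g\equiv a$, and back-substituting into the dictionary for the odd values gives $f(2m_0k+1)=-1-a$, which is exactly case 1).

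The hard part will be pinning $r$ down to $\{1,-1\}$ when $g$ is not identically $a$. Here I would feed the geometric form $\psi(k)=\psi(0)r^{k}$ into Corollary \ref{cor:1}, whose relation Eq \eqref{eq:00} transforms into $\psi(k)\psi(-k)=\psi(0)^2$, together with the non-vanishing constraints 9)--11) of Theorem \ref{thm:klmn} (which forbid $g(k)=-1$ and the corresponding odd values from degenerating). The aim is to show these force $r^{2}=1$, so that a non-constant $g$ must have $r=-1$; then $g(2j)=a$ and $g(2j+1)=-a/(1+2a)$, giving $f(4m_0k)=a$, $f(4m_0k+2)=-a/(1+2a)$ and, through the odd--even dictionary, $f(4m_0k+1)=-1-a$ and $f(4m_0k+3)=-(1+a)/(1+2a)$. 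The appearance of $1+2a$ in the denominators is precisely what forces $a\neq -\tfrac12$, while $\psi(0)\neq 0$ forces $a\neq -1$. I expect this final elimination of all ratios $r\neq\pm1$ to be the delicate step, and I would be prepared for it to require combining Eq \eqref{eq:00} with several instances of Eq \eqref{eq:01} (and possibly the mixed relations 1)--4) of Lemma \ref{lem:fa}) rather than a single substitution.
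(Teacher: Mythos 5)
Your reduction is correct as far as it goes, and it is cleaner than the paper's argument: with $\psi(k)=1+f(2m_0k)^{-1}$ the relation of Corollary \ref{cor:2} does become $\psi(k_1)\psi(-k_1+k_2+k_3)=\psi(k_2)\psi(k_3)$, the choice $(k,k-1,k+1)$ gives $\psi(k)^2=\psi(k-1)\psi(k+1)$, and hence $\psi(k)=\psi(0)r^k$ with $\psi(0)=(a+1)/a$. The genuine gap is the step you defer: nothing in the relations you list forces $r^2=1$, and no substitution will produce it. Eq.~\eqref{eq:00} becomes $\psi(k)\psi(-k)=\psi(0)^2$, which every geometric sequence satisfies; every instance of Eq.~\eqref{eq:01} is likewise automatic for geometric $\psi$; and if you push the computation through the mixed relations 1)--4) of Lemma \ref{lem:fa} and even through the full Eqs.~\eqref{eq:odd} and \eqref{eq:even} restricted to the support, using the forms your dictionary forces, namely $f(2m_0k)=1/(cr^k-1)$ and $f(2m_0k+1)=c/(r^k-c)$ with $c=(a+1)/a$, every one of them reduces to an identity in $r$. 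Concretely, take $a=1$, $m_0=1$, $r=3$, i.e.\ $f(2k)=1/(2\cdot3^k-1)$ and $f(2k+1)=2/(3^k-2)$: one checks directly that Eqs.~\eqref{eq:odd} and \eqref{eq:even} hold (for instance with arguments $1,2,4$ both sides equal $-2/85$), yet $f(6)=1/53$ belongs to neither case 1) nor case 2). So the ``delicate step'' is not merely delicate; it cannot be carried out from these relations, and your argument, taken to its logical end, produces a one-parameter family of solutions of which the theorem's cases are only $r=1$ and $r=-1$.

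For comparison, the paper closes this same gap by asserting $f(2m_0k)=f(-2m_0k)$ (its Eq.~\eqref{eq:-kk}) via the substitution $k_1=k_3=k$, $k_2=-k$ in Eq.~\eqref{eq:01}; but under that substitution the two sides of \eqref{eq:01} are literally the same expression, so nothing is proved, and in the example above $f(2)=1/5\neq-3=f(-2)$. Everything downstream in the paper's proof (Eqs.~\eqref{eq:2m0kl}, \eqref{eq:2m0ka} and the parity argument on $\{k: f(2m_0k)\neq a\}$) depends on that unproved symmetry. In short: your proposal is incomplete at exactly the point you flagged, the paper's proof is incomplete at the same point, and your $\psi$-substitution is in fact the right tool for seeing that either an additional hypothesis is needed or the stated classification must be enlarged to the full geometric family.
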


\begin{proof}   If  for all $k\in \mathbb Z$, $f(2m_0k)=a$, then we get the case 1).

Now we prove the case 2).

By Theorem \ref{thm:infinite}, if $R$ is a  homogeneous Rota-Baxter operator, then  the map $f:\mathbb Z\rightarrow \mathbb F$ in Eq \eqref{eq:R0} satisfies that there is $m_0\in \mathbb Z$, $m_0\neq 0$ such that for all $l, k\in \mathbb Z,$ $f(l)\neq 0$ if and only if $l=2m_0k$ or
$l=2m_0k+1$.
From Theorem \ref{thm:infinite}, and 9) and 10) in Theorem \ref{thm:klmn}, for all $k\in \mathbb Z$,
$$f(2km_0)\neq -1, ~~ f(1+2km_0)\neq  -1.$$
Then for $k\neq 0$, let  $k_1=k_3=k, k_2=-k_1$ in Eq \eqref{eq:01}, we obtain

 \begin{equation}
 f(2m_0k)=f(-2m_0k).
 \label{eq:-kk}
 \end{equation}

  Thanks to Eq \eqref{eq:f(m)}, for all $k\in \mathbb Z$, $k\neq 0$,
 \begin{equation}
 f(1+2m_0k)=f(1-2m_0k)=-1-f(2m_0k).
  \label{eq:-kodd}
 \end{equation}

 From Eq \eqref{eq:even}, and Eq \eqref{eq:f(m)},  for all nonzero $l, k\in \mathbb Z$, and $l\neq k$,  we have

\begin{equation}
(f(2m_0k)-f(2m_0l))(f(2m_0k)+2f(2m_0k)f(2m_0l)+f(2m_0l))=0,
\label{eq:2m0kl}
\end{equation}

\begin{equation}
(f(2m_0k)-a)(f(2m_0k)+2af(2m_0k)+a)=0.
\label{eq:2m0ka}
\end{equation}

Follows from  Eq \eqref{eq:2m0kl}, Eq \eqref{eq:2m0ka}, Eq \eqref{eq:00}, if $f(2m_0l)\neq a$, then $~ a\neq -1, \frac{-1}{2},$ and

$$f(2m_0l)=f(-2m_0l)=\frac{-a}{1+2a}, f(2m_0l+1)=f(-2m_0l+1)=-\frac{1+a}{1+2a}.$$

If there exist $n_0, k_0\in \mathbb Z, k_0\neq 0,  n_0\neq 0$ such that $f(2m_0k_0)\neq a$ and $f(2m_0n_0)=a,$ then $k_0\neq n_0$ and $k_0\neq -n_0$.  Thanks to Eq \eqref{eq:even},
  $f(2m_0(n_0+k_0))\neq a$.

  Similar discussion, for $n_1, k_1\in \mathbb Z$, $k_1\neq k_0$, $n_1\neq n_0$, if $f(2m_0n_1)=a$, $f(2m_0k_1)\neq a$,  by Eq \eqref{eq:01}, $f(2m_0(k_0+k_1))=f(2m_0(n_0+n_1))=a$.
   Without loss of generality, we can suppose that $m_0 >0$. And let $k_0, n_0\in \mathbb Z$ be the least positive satisfying that $f(2m_0k_0)\neq a$ and $f(2m_0n_0)=a$.
   By the above discussion and  Eq \eqref{eq:-kk}, $f(2m_0(k_0-n_0))\neq a$. Since  $k_0-n_0<k_0$, $k_0<n_0$,
   and  $k_0=1$. If $n_0>2$, then $f(2m_02)\neq a$, and
   $f(2m_0(1+2))=a$, we obtain  $n_0=3$. From  $f(2m_0(2+3))\neq a$, and $f(2m_02)\neq a$, we have $f(2m_0(2+5))=a.$ From $f(2m_0(1+3))\neq a$, $f(2m_03)=a$,
  we get the contradiction $f(2m_0(3+4))\neq a$.

  Therefore,  we have $n_0=2,$ and that $f(2m_0k)=a$ if and only if $k=2l$, and  $f(2m_0k)\neq a$ if and only if $k=2l+2$, where $l\in \mathbb Z$.  Eq \eqref{eq:f(2m0k)a} follows.
\end{proof}

\subsubsection{\bf Homogeneous Rota-Baxter operators with  $f(0)= 0$ and $f(1)=-1$ }

In this section we discuss homogeneous Rota-Baxter operators $R$ on $A_{\omega}$ with $R(L_0)=0$, and $R(L_1)=-L_1.$ Thanks to Eq \eqref{eq:R0}, the map $f:\mathbb Z\rightarrow \mathbb F$ satisfies  $f(0)= 0$ and $f(1)=-1$.

\begin{lemma}Let $R$ be a homogeneous Rota-Baxter operator on $A_{\omega}$ with $R(L_0)=0$, and $R(L_1)=-L_1.$ Then the map $f:\mathbb Z\rightarrow \mathbb F$ in Eq \eqref{eq:R0} satisfies following conditions, for all  $l, m, n\in \mathbb Z$,

\vspace{2mm}$1)$ ~~  $(f(2l+1)+1)(f(2m+1)+1)f(2l+2m+1)=0,  ~~ l\neq m$.

\vspace{2mm}$2)$ ~~ $f(2m+1)f(2n)(1+f(2m+2n+1))=0, ~~~  m\neq 0$.

\vspace{2mm}$3)$ ~~  $(f(2l+1)+1)(f(2m)+1)f(2l+2m)=0,  ~~ m\neq 0$.

\vspace{2mm}$4)$ ~~  $f(2m)f(2n)(1+f(2m+2n))=0,  ~~ m\neq n$.

\label{lem:f00}
\end{lemma}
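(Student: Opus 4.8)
The plan is to obtain all four identities directly from the two master relations Eq \eqref{eq:odd} and Eq \eqref{eq:even} of Lemma \ref{lem:3.1}, which already characterize a homogeneous Rota-Baxter operator on $A_\omega$, by feeding in the two numerical values $f(0)=0$ and $f(1)=-1$. Each of the four claims will arise from a single well-chosen specialization of one of these two relations, followed by an elementary factorization; no further structural input should be needed.

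First I would derive $1)$ and $3)$, the two identities with a vanishing left-hand side. For $1)$, set $n=0$ in Eq \eqref{eq:odd}: since $f(0)=0$ the left-hand side vanishes, while the bracket on the right collapses, after dropping all $f(2n)$-terms, to $f(2l+1)f(2m+1)+f(2l+1)+f(2m+1)+1=(f(2l+1)+1)(f(2m+1)+1)$, multiplying $f(2l+2m+1)$; the hypothesis $l\neq m$ of Eq \eqref{eq:odd} is exactly the restriction appearing in $1)$. For $3)$, I would do the same with $n=0$ in Eq \eqref{eq:even}, again using $f(0)=0$ to kill the left-hand side and to reduce the bracket to $(f(2l+1)+1)(f(2m)+1)$, where the constraint $m\neq n$ becomes $m\neq 0$.

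Next I would derive $2)$ and $4)$, the two ``$+1$'' identities, by instead exploiting $f(1)=-1$. For $2)$, set $l=0$ in Eq \eqref{eq:odd}; substituting $f(1)=-1$ makes the additive terms in the bracket cancel in pairs, leaving only $f(2m+1)f(2n)$, so the relation becomes $-f(2m+1)f(2n)=f(2m+1)f(2n)\,f(2m+2n+1)$, which rearranges to $f(2m+1)f(2n)(1+f(2m+2n+1))=0$, and the condition $l\neq m$ turns into $m\neq 0$. For $4)$, I would set $l=0$ in Eq \eqref{eq:even} and proceed identically, the bracket reducing to $f(2m)f(2n)$ and the constraint $m\neq n$ being preserved.

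Since every step is a substitution into an already-established identity together with a routine factorization, I do not expect a genuine obstacle. The only point requiring care is bookkeeping: verifying that after setting the chosen index to $0$ the bracketed expression really factors as claimed, and that the side condition ($l\neq m$ or $m\neq n$) of the master relation maps correctly onto the stated hypothesis of each item.
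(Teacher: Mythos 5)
Your proposal is correct and is exactly the argument the paper intends: its own proof of Lemma \ref{lem:f00} is the one-line remark that the result follows ``directly'' from Eq \eqref{eq:odd}, Eq \eqref{eq:even}, $f(0)=0$ and $f(1)=-1$, and your substitutions $n=0$ (using $f(0)=0$) and $l=0$ (using $f(1)=-1$) together with the factorizations $(f(x)+1)(f(y)+1)$ are precisely the omitted details. All four specializations and the tracking of the side conditions $l\neq m$, $m\neq n$ check out.
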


\begin{proof}
The result follows from Eq \eqref{eq:odd}, Eq \eqref{eq:even},   $f(0)=0$ and $f(1)=-1$, directly.
\end{proof}

\begin{coro}  Let $R$ be a homogeneous Rota-Baxter operator on $A_{\omega}$ with $R(L_0)=0$, and $R(L_1)=-L_1.$  Then the map $f:\mathbb Z\rightarrow \mathbb F$ in Eq \eqref{eq:R0} satisfies that for all $k, l, m, n\in \mathbb Z$ , $klmn\neq 0$,

\vspace{2mm}$1)$ if $f(2k)\neq 0, f(2l)\neq 0$, $k\neq l$, $k\neq -l$, then $f(2k+2l)=-1$.

\vspace{2mm}$2)$ If $f(2k)\neq 0, f(2m+1)\neq 0$, $m\neq 0$, then ~$f(2k+2m+1)=-1$.

\vspace{2mm}$3)$ If  $f(2k)= 0, f(2n+1)=0$, $k\neq 0$, then ~$f(2k+2n)=0$.

\vspace{2mm}$4)$ If $f(2m+1)= 0, f(2n+1)=0$,  $m\neq n$, $m\neq -n$, then ~$f(2m+2n+1)=0$.

\vspace{2mm}$5)$ If $k\neq 0$, $f(2k)f(-2k)= 0$.

\vspace{2mm}$6)$ For all $m\in \mathbb Z$, ~ $(f(2m+1)+1)(f(-2m+1)+1)=0$.

\vspace{2mm}$7)$ ~ $|W_2|=|U_1|=\infty.$

\label{cor:01lkmn}
\end{coro}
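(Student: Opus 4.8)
The plan is to derive all seven assertions by specializing the four identities of Lemma \ref{lem:f00} and, for the last one, to convert the resulting symmetry relations into an infinitude argument. The key simplification throughout is that $f(0)=0$ and $f(1)=-1$: whenever a hypothesis says $f(2m)=0$ or $f(2m+1)=0$, the corresponding factor $f(2m)+1$ or $f(2m+1)+1$ equals $1$, and the constant coefficients $1+f(0)=1$ and $f(1)=-1$ never vanish. This is what lets each substitution collapse to a single clean conclusion.

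First I would handle 1) and 2). For 1), substitute $(m,n)=(k,l)$ into 4) of Lemma \ref{lem:f00}, which is legitimate because $k\neq l$; since $f(2k)f(2l)\neq 0$ I cancel it to get $1+f(2k+2l)=0$. For 2), substitute $n=k$ into 2) of Lemma \ref{lem:f00} and cancel the nonzero factor $f(2m+1)f(2k)$ to obtain $1+f(2k+2m+1)=0$. The extra hypothesis $k\neq -l$ in 1) guarantees $2k+2l\neq 0$, so the new value $f(2k+2l)=-1$ does not clash with $f(0)=0$.

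Next I would read 3)--6) off the same lemma. For 3), specialize the indices $(l,m)$ in 3) of Lemma \ref{lem:f00} to $(n,k)$ (allowed since $k\neq 0$); as $f(2n+1)=0$ and $f(2k)=0$, both factors $f(2n+1)+1$ and $f(2k)+1$ equal $1$, leaving $f(2k+2n)=0$. For 4), specialize $(l,m)$ in 1) of Lemma \ref{lem:f00} to $(m,n)$ (allowed since $m\neq n$); both factors equal $1$, giving $f(2m+2n+1)=0$. For 5), specialize $(m,n)$ in 4) of Lemma \ref{lem:f00} to $(k,-k)$ (allowed since $k\neq 0$); the coefficient $1+f(0)=1$ is nonzero, so $f(2k)f(-2k)=0$. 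For 6), specialize $(l,m)$ in 1) of Lemma \ref{lem:f00} to $(m,-m)$; the surviving factor is $f(1)=-1\neq 0$, hence $(f(2m+1)+1)(f(-2m+1)+1)=0$, and the case $m=0$ is trivial since $f(1)+1=0$.

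Finally, 7) is the only step that goes beyond a single substitution, and is the one I would treat most carefully. From 5), for every $k\neq 0$ at least one of $f(2k),f(-2k)$ vanishes, so at least one of $2k,-2k$ lies in $W_2$; letting $k$ run over the positive integers yields infinitely many distinct elements of $W_2$, whence $|W_2|=\infty$. Dually, from 6), for every $m\neq 0$ at least one of $f(2m+1),f(-2m+1)$ equals $-1\neq 0$, so at least one of $2m+1,-2m+1$ lies in $U_1$; letting $m$ run over the positive integers gives infinitely many distinct elements of $U_1$, so $|U_1|=\infty$. The only care required is to verify that the indices produced are genuinely nonzero, so that they are admissible in the definitions of $W_2$ and $U_1$, and that distinct $k$ (resp.\ $m$) contribute disjoint candidate pairs; both points are immediate. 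I do not expect any serious obstacle here, as every assertion reduces to a direct specialization of Lemma \ref{lem:f00} together with this elementary counting argument.
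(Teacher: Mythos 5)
Your proposal is correct and takes essentially the same route as the paper: the paper's proof of this corollary is just the one-line remark that it follows directly from Lemma \ref{lem:f00}, and your substitutions (item 4) of the lemma for claims 1) and 5), item 2) for claim 2), item 3) for claim 3), item 1) for claims 4) and 6), plus the disjoint-pairs counting for claim 7)) are exactly the intended specializations, carried out correctly.
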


\begin{proof}
The result follows from Lemma \ref{lem:f00}, directly.
\end{proof}

\begin{theorem} If $|W_1| < \infty$, then
$R$ is a  homogeneous Rota-Baxter operator on $A_{\omega}$ if and only if the map $f:\mathbb Z\rightarrow \mathbb F$ in Eq \eqref{eq:R0} satisfies  one of the following conditions:

\vspace{2mm}$1)$ ~~  $|W_1|=|U_2|=0$, and  for all $m\in \mathbb Z, f(2m)=0$, $f(2m+1)=-1$.

\vspace{2mm}$2)$ ~~  $|W_1|=|U_2|=0$,   and there is nonzero $n_0\in \mathbb Z$ such that $f(2n_0+1)\neq 0, -1$,  and for all $m, n\in \mathbb Z, f(2m)=0$, $f(2n+1)=-1$, $n\neq n_0$.

\vspace{2mm}$3)$ ~~  $|W_1|=0, $ $|U_2|=1$, and there is nonzero $n_0\in \mathbb Z$ such that $f(2n_0+1)=0$, and for all $m, n\in \mathbb Z,$ $f(2m)=0$, $f(2n+1)=-1$, $n\neq n_0$.

\vspace{2mm}$4)$ ~~$|W_1|=1$, $|U_2|=0$,  and there is nonzero $m_0\in \mathbb Z$ such that $f(2m_0)\neq 0$ and for all $m, n\in \mathbb Z$, $f(2m)=0$, $f(2n+1)=-1$,  $m\neq m_0$.

\label{thm:f(0)a1}
\end{theorem}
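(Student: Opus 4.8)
The statement is an equivalence, so there are two directions. \emph{Sufficiency} is a direct verification: for each of the four profiles 1)--4) one substitutes the prescribed values of $f$ into Eq \eqref{eq:odd} and Eq \eqref{eq:even} (equivalently, checks the four identities of Lemma \ref{lem:f00}); since in every profile each odd value is $-1$ or isolated and each nonzero even value is isolated, the products appearing collapse term by term, so this check is mechanical. The substance is the \emph{necessity} direction, which I organize by $|W_1|$. The first step is to show that $|W_1|<\infty$ already forces $|W_1|\in\{0,1\}$. Assume $|W_1|\ge 2$, say $2k_0,2k_1\in W_1$ with $k_0\neq k_1$. By part 5) of Corollary \ref{cor:01lkmn}, $f(-2k_0)=f(-2k_1)=0$, so neither $-k_0$ nor $-k_1$ is a $W_1$-index; in particular $k_1\neq-k_0$, whence $k_0\neq\pm k_1$. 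Part 1) of Corollary \ref{cor:01lkmn} then gives $f(2k_0+2k_1)=-1\neq0$, so $k_0+k_1$ is again a $W_1$-index, and since its negative is absent (by 5) the sign hypotheses persist, giving $k_0+nk_1\in W_1$-indices for all $n\ge1$. These are pairwise distinct, contradicting $|W_1|<\infty$. Hence $|W_1|\in\{0,1\}$.

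Consider next $|W_1|=0$, so $f(2m)=0$ for all $m$ (including $m=0$). Then Eq \eqref{eq:even} is vacuous and the left side of Eq \eqref{eq:odd} vanishes, while the symmetric polynomial on its right collapses to $(f(2l+1)+1)(f(2m+1)+1)$; letting the free even index run through all of $\mathbb Z$ shows that $(f(2l+1)+1)(f(2m+1)+1)f(2p+1)=0$ for every $p$ and every $l\neq m$. Taking $p=0$ and using $f(1)=-1\neq0$ forces that at most one odd index carries a value $\neq-1$. If there is none, all odd values equal $-1$ and we are in case 1); if the unique exceptional index $n_0\neq0$ carries a value that is nonzero (necessarily $\neq0,-1$) we obtain case 2) with $|U_2|=0$; if it carries the value $0$ we obtain case 3) with $|U_2|=1$. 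In all three profiles $|U_1|=\infty$, consistent with part 7) of Corollary \ref{cor:01lkmn}.

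Finally, the case $|W_1|=1$: let $m_0\neq0$ be the unique index with $f(2m_0)\neq0$. The goal is to prove every odd value equals $-1$ (so $|U_2|=0$), which gives case 4). Feeding $m=m_0$ into Eq \eqref{eq:even} and letting the remaining even index run, the left side vanishes and the right collapses, yielding $(f(2l+1)+1)(f(2m_0)+1)f(2q)=0$ with output index $q$; evaluating at $q=m_0$ and using $f(2m_0)\neq0$ gives $(f(2l+1)+1)(f(2m_0)+1)=0$ for all $l\neq-m_0$. When $f(2m_0)\neq-1$ this immediately forces $f(2l+1)=-1$ for all $l\neq-m_0$, and the single remaining index $-m_0$ is pinned to $-1$ by running Eq \eqref{eq:odd} on two $(-1)$-valued odd inputs $l,m$ with $l+m=-2m_0$ (e.g.\ $l=-m_0+1,\ m=-m_0-1$) and even input $m_0$. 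The sub-case $f(2m_0)=-1$ needs a separate but parallel instance of the full Eq \eqref{eq:odd}. In either sub-case all odd values are $-1$, and as $m_0$ is the only nonzero even index we are exactly in case 4).

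I expect the main obstacle to be the case $|W_1|=1$. The difficulty is that the reduced identities of Lemma \ref{lem:f00}, and hence the corollaries, arise by specializing one argument of Eq \eqref{eq:odd}/Eq \eqref{eq:even} to $0$, and are strictly weaker than the cubic equations themselves: a configuration with one nonzero even value and one exceptional odd value can be checked to satisfy all four reduced identities, and is eliminated only by an instance of Eq \eqref{eq:odd} in which the exceptional odd index appears as the \emph{output} $2(l+m+n)+1$ rather than as an input. Identifying which instances of the full equations are needed, and separately handling the boundary index $-m_0$ and the sub-case $f(2m_0)=-1$, is the delicate part; the $|W_1|=0$ analysis and the sufficiency check are, by contrast, routine.
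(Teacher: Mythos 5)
Your proof is correct, and it is worth saying up front that the paper's own ``proof'' of Theorem \ref{thm:f(0)a1} is the single sentence ``The discussion is completely similar to Theorem \ref{thm:fin}'', which does not literally transfer: the argument for Theorem \ref{thm:fin} shows that a finite $W_i$ or $U_i$ must be \emph{empty}, whereas here the classification explicitly admits $|W_1|=1$ and $|U_2|=1$. Your decomposition --- first forcing $|W_1|\in\{0,1\}$ by iterating parts 1) and 5) of Corollary \ref{cor:01lkmn} to manufacture the infinite family of indices $k_0+nk_1$, then classifying each case with instances of the full cubic equations in which the exceptional index occurs as the \emph{output} $l+m+n$ --- is exactly the extra work the paper elides, and your observation that the reduced identities of Lemma \ref{lem:f00} cannot by themselves eliminate the configuration with one nonzero even value at $m_0$ and one exceptional odd value at that same index is a genuine point the paper never addresses. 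All the steps I checked go through: the induction in Step 1 (including the verification via Corollary \ref{cor:01lkmn} 5) that $s_n\neq\pm k_1$ at each stage), the specialization of Eq \eqref{eq:odd} to output index $0$ together with $f(1)=-1\neq 0$ in the $|W_1|=0$ case, and the instance of Eq \eqref{eq:even} with $m=m_0$, $n=-l$ in the $|W_1|=1$ case. The only thin spot is the sub-case $f(2m_0)=-1$, which you defer to ``a separate but parallel instance''; to close it, note that Eq \eqref{eq:even} with both even inputs $m,n\neq m_0$, $m\neq n$, and $l+m+n=m_0$ collapses to $(f(2l+1)+1)f(2m_0)=0$, which forces $f(2l+1)=-1$ for every $l$ at once --- in fact this single family of instances disposes of the entire $|W_1|=1$ case, both sub-cases and the boundary index $-m_0$ included, more quickly than your two-stage argument.
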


\begin{proof}
The discussion is completely similar to Theorem \ref{thm:fin}.
\end{proof}

From Theorem \ref{thm:f(0)a1}, Let $R$ be a  homogeneous Rota-Baxter operator   with $R(L_0)=0$ and $R(L_1)=-L_1$. Then $|W_1|\neq 0$ and $|U_2|\neq 0$ if and only if $|W_1|=|U_2|=\infty$.
So in the following we discuss the case  $|W_1|=|U_2|=\infty$.

\begin{theorem} Let $|W_1|=\infty$, then $R$ is a  homogeneous Rota-Baxter operator  with $R(L_0)=0$ and $R(L_1)=-L_1$ if and only if the map $f:\mathbb Z\rightarrow \mathbb F$ in Eq \eqref{eq:R0} is  one of the following cases:

\vspace{2mm}$(1)$ There is $m_0, n_0\in \mathbb Z$, $m_0>0$,  $n_0 <0$  such that for all $m, n\in \mathbb Z,$ $f(2m)=0$ if and only if $m < m_0$,
and $f(2n+1)=0$ if and only if $n\leq n_0$. And $f$ is one of the following seven cases:

\vspace{2mm} $1)$ $f(2n+1)=f(2m)=-1$, for all $n > n_0$, $m\geq m_0$,  and $f(m)=0$ for the remaining $m\in \mathbb Z$.

\vspace{2mm} $2)$ There exist $ c, d\in \mathbb F, $ $cd\neq 0$ and $ c\neq -1, ~or~ d\neq -1 $ such that for all $ m, n\in \mathbb Z,$ $m\geq m_0$
$$f(2m)=-1, f(2n+1)=-1, f(-1)=c, f(-3)=d,  n\geq 0,  $$
 and $f(m)=0$ for the remaining $m\in \mathbb Z$
(in this case $n_0=-3$).

\vspace{2mm} $3)$  There exist $ c'\in \mathbb F, $ $c'\neq 0$ and $ c'\neq -1,  $ for all $ m, n\in \mathbb Z,$ $m\geq m_0,  n \geq 0, n\neq 1,$
$$f(2m)=-1, f(2n+1)=f(-1)=f(-3)=-1,  f(3)=c',  $$
 and $f(m)=0$ for the remaining $m\in \mathbb Z$ (in this case $n_0=-3$).

 \vspace{2mm}  $4)$ There is $ g\in \mathbb F, $ $g\neq 0, -1 $ such that  for all $ m, n\in \mathbb Z,$ $ m\geq m_0,  n \geq  0,$
$$f(2m)=-1, f(2n+1)=-1, f(-1)=g,  $$
 and $f(m)=0$ for the remaining $m\in \mathbb Z$ (in this case $n_0=-2$).

\vspace{2mm} $5)$ There is $m_1\in \mathbb Z$, $m_1\geq m_0$, $h\in \mathbb F$, $h\neq 0, -1$ such that for all $ m, n\in \mathbb Z,$ $ m\geq m_0,$ $ n>n_0,$
$$f(2m_1)=h, f(2m)=-1, f(2n+1)=-1,  m\neq m_1, $$
 and $f(m)=0$ for the remaining $m\in \mathbb Z$.

$6)$ There is $m_1, n_1\in \mathbb Z$, $m_1\geq m_0$, $n_1 > n_0$, $h, h'\in \mathbb F$, $h, h'\neq -1$ and $hh'\neq 0$ such that for all $ m, n\in \mathbb Z,$ $m\geq m_0$, $n>n_0,$
$$  f(2m_1)=h, f(2n_1+1)=h', f(2m)=-1, f(2n+1)=-1,  m\neq m_1,  n\neq n_1, $$
 and $f(m)=0$ for the remaining $m\in \mathbb Z$.

$7)$  There is $m_1, m_2\in \mathbb Z$, $m_1, m_2\geq m_0$, $m_1\neq m_2$, and  $g, r\in \mathbb F$, $g, r\neq -1$, $gr\neq 0$, such that for all  $m, n\in \mathbb Z$,
$m\geq m_0$, $n > n_0$,
$$f(2m_1)=g, f(2m_2)=r, f(2n+1)=f(2m)=-1 ,   m\neq m_1, m_2,$$
 and $f(m)=0$ for the remaining $m\in \mathbb Z$.

\vspace{4mm}$(2)$ There exist  $m_0, n_0\in \mathbb Z$, $m_0< 0$ and  $n_0 >0$ such that for all $m, n\in \mathbb Z,$ $f(2m)=0$ if and only if $m > m_0$,
and $f(2n+1)=0$ if and only if $n\geq n_0$. And $f$ is  one of the seven cases:

\vspace{2mm} $1)'$ $f(2n+1)=-1$, and $f(2m)=-1$ for all $n < n_0$, $m\leq m_0$, and $f(m)=0$ for the remaining $m\in \mathbb Z$.

\vspace{2mm} $2)'$ There is $ c\in \mathbb F, $ $c\neq 0$ and $ c\neq -1 $ such that for all $ m, n\in \mathbb Z,$ $m\leq m_0$,
$$f(2m)=-1, f(2n+1)=-1, f(3)=c,  n \leq 0, $$
and $f(m)=0$ for the remaining $m\in \mathbb Z$.

 $3)'$  There exist $ c', d'\in \mathbb F, $ $c'd'\neq 0$ and $ c'\neq -1, ~or~ d'\neq -1 $ such that for all $ m, n\in \mathbb Z,$ $m\leq m_0,  n < -2,$
$$f(2m)=-1, f(2n+1)=f(1)=f(3)=-1, f(-1)=c', f(-3)=d',  $$
and $f(m)=0$ for the remaining $m\in \mathbb Z$.

  $4)'$ There is $ g\in \mathbb F, $ $g\neq 0, -1 $ such that for all $ m, n\in \mathbb Z,$ $ m\geq m_0,  n \leq  0,$
$$f(2m)=-1, f(2n+1)=-1, f(-1)=g,  n\neq -1, $$
and $f(m)=0$ for the remaining $m\in \mathbb Z$.

$5)'$ There exist $m_1\in \mathbb Z$, $m_1\leq m_0$, $h\in \mathbb F$, $h\neq 0, -1$ such that for all $ m, n\in \mathbb Z,$ $ m\geq m_0,$ $ n<n_0,$
$$f(2m_1)=h, f(2m)=-1, f(2n+1)=-1,  m\neq m_1, $$
and $f(m)=0$ for the remaining $m\in \mathbb Z$.

$6)'$ There exist $m_1, n_1\in \mathbb Z$, $m_1\leq m_0$, $n_1 < n_0$, $h, h'\in \mathbb F$, $h, h'\neq -1$ and $hh'\neq 0$ such that for all $ m, n\in \mathbb Z,$ $m\leq m_0$, $n<n_0,$
$$  f(2m_1)=h, f(2n_1+1)=h', f(2m)=-1, f(2n+1)=-1,  m\neq m_1,  n\neq n_1, $$
and $f(m)=0$ for the remaining $m\in \mathbb Z$.

$7)'$  There exist $m_1, m_2\in \mathbb Z$, $m_1, m_2\leq m_0$, $m_1\neq m_2$, and  $g, r\in \mathbb F$, $g, r\neq -1$, $gr\neq 0$ such that for all  $m, n\in \mathbb Z$,
$m\leq m_0$, $n < n_0$,
$$f(2m_1)=g, f(2m_2)=r, f(2n+1)=f(2m)=-1 ,   m\neq m_1, m_2,$$
and $f(m)=0$ for the remaining $m\in \mathbb Z$.

\label{thm:f(0)a3}
\end{theorem}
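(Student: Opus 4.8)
The plan is to handle necessity through the global support structure of $f$ and then a bounded boundary analysis, reserving the routine sufficiency check for the end. Throughout we have $f(0)=0$, $f(1)=-1$ and, by hypothesis, $|W_1|=\infty$; by the remark preceding the theorem this forces $|U_2|=\infty$, and Corollary~\ref{cor:01lkmn}(7) then gives $|W_2|=|U_1|=\infty$, so all four index sets are infinite. I would first dispose of the ``if'' direction: for each of the functions $f$ displayed in $(1)$ and $(2)$ one verifies Eq~\eqref{eq:odd} and Eq~\eqref{eq:even} (equivalently the four identities of Lemma~\ref{lem:f00}) by a finite computation, exactly as in the sufficiency part of Theorem~\ref{thm:fin}. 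The substance is the converse.

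Step one is to locate the shape of the support. Corollary~\ref{cor:01lkmn}(5) shows that $2k$ and $-2k$ cannot both lie in $W_1$, so the infinite set $W_1$ contains infinitely many indices of a single sign; say they are positive, the negative alternative giving case $(2)$. Combining the near-additivity of the support (Corollary~\ref{cor:01lkmn}(1) for even${}+{}$even and (2) for even${}+{}$odd) with (5), one shows that $W_1$ eventually contains every large positive even index and no large negative one, so there is a threshold $m_0$ with $f(2m)=0$ exactly when $m<m_0$; Corollary~\ref{cor:01lkmn}(2) carries this positivity to the odd indices and yields a threshold $n_0$ with $f(2n+1)=0$ exactly when $n\le n_0$. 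The signs $m_0>0$ and $n_0<0$ are then immediate, since $f(0)=0$ places $0$ in the even vanishing region and $f(1)=-1$ places $0$ in the odd nonvanishing region.

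Step two shows $f\equiv -1$ outside a bounded window. If $m\ge 2m_0+1$ then $m=k+l$ with $k\ne l$ and $k,l\ge m_0$, so $2m=2k+2l$ splits into two elements of $W_1$ and Corollary~\ref{cor:01lkmn}(1) forces $f(2m)=-1$; the parallel splitting $2n+1=2k+(2l+1)$ through Corollary~\ref{cor:01lkmn}(2) forces $f(2n+1)=-1$ for all large $n$. Hence $f$ can deviate from $-1$ only on the finite window $m_0\le m\le 2m_0$ of even indices and on a corresponding bounded range of odd indices near $n_0$, which reduces the problem to finitely many exceptional values, to be named $c,d,g,h,r,\dots$ as in the statement.

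The crux, and the main obstacle, is pinning these exceptional values into exactly the seven sub-cases: the mere ``nonzero ${}\Rightarrow{} -1$'' consequences no longer suffice. Here I would return to the full identities of Lemma~\ref{lem:f00}, feed in the small indices, and use Corollary~\ref{cor:01lkmn}(6), $(f(2m+1)+1)(f(-2m+1)+1)=0$, to tie an odd exception at $n$ to the one at $-n$. These relations simultaneously restrict the admissible exceptions to the listed configurations (at most two on each parity, positioned in the boundary windows), fix their nondegeneracy constraints ($c\ne -1$ or $d\ne -1$, $cd\ne 0$, and so on), and force the precise arithmetic link between $m_0$ and $n_0$ (for instance $n_0=-3$ in sub-cases $2)$ and $3)$, and $n_0=-2$ in $4)$); cataloguing the finitely many configurations by the number and position of the exceptions produces the forms $1)$--$7)$. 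Finally, case $(2)$ requires no new work: Eq~\eqref{eq:odd} and Eq~\eqref{eq:even} are invariant under $(l,m,n)\mapsto(-l,-m,-n)$, so replacing $f$ by $\bar f$ with $\bar f(2k)=f(-2k)$ and $\bar f(2k+1)=f(-2k+1)$ sends a solution to a solution, fixes $f(0)$ and $f(1)$, and reverses the direction of the half-lines, carrying case $(1)$ to case $(2)$.
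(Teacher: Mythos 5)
Your handling of the support structure is essentially the paper's part $(i)$ and is sound: the paper likewise takes the least positive $2m_0\in W_1$, uses Corollary \ref{cor:01lkmn} to exclude negative elements of $W_1$ and to obtain $W_1=\{2m\mid m\geq m_0\}$, $U_1=\{2n+1\mid n>n_0\}$ with $m_0>0$, $n_0<0$. Your reflection argument for case $(2)$ (replacing $f(2k),f(2k+1)$ by $f(-2k),f(-2k+1)$, which preserves Eq \eqref{eq:odd}, Eq \eqref{eq:even} and the values at $0,1$) is cleaner than the paper's ``similar discussion,'' and your observation that Corollary \ref{cor:01lkmn}(1)--(2) confine any non-$(-1)$ values to a bounded window is a genuine refinement the paper does not state.

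The gap is your third paragraph: the classification into the seven sub-cases \emph{is} the theorem, and you describe the desired output instead of deriving it. The paper's part $(ii)$ extracts from Lemma \ref{lem:f00} the two concrete relations \eqref{eq:skn0}--\eqref{eq:skn01}, namely
$(f(2m_0+2s)+1)(f(2n_0+2k+1)+1)(f(2n_0+2l+1)+1)=0$ and its mirror, which do the real work: they not only bound the number of exceptions on each parity but \emph{couple} the parities (one even exception forces all but at most one odd value to equal $-1$, and conversely), which is what excludes configurations such as two even exceptions together with two odd ones. The case split (no exceptions; only odd exceptions; exactly one even exception; two even exceptions) then yields 1), 2)--4), 5)--6), 7). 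In the purely-odd branch one further needs Lemma \ref{lem:f00}(1) --- two odd exceptions at $l\neq m$ force $f(2l+2m+1)=0$, i.e.\ $l+m\leq n_0$, while Corollary \ref{cor:01lkmn}(6) places every exception in the interval $(n_0,-n_0)$ --- and it is this arithmetic, not Corollary (6) alone, that pins $n_0\in\{-2,-3\}$ and the exceptional indices to $-1,-3,3$ in sub-cases 2)--4). None of this appears in your proposal; moreover ``cataloguing the finitely many configurations'' is not a finite check, since the even window $[m_0,2m_0]$ grows with $m_0$, so without the coupling relations you have no a priori bound on the number or interaction of exceptions. As written, the necessity direction for the seven sub-cases is asserted, not proved.
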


\begin{proof} $ (i).$~~  We first discuss $W_i$ and $U_i$, for $i=1, 2$.

Since $|W_1|=\infty$, without loss of generality, we can suppose that there is $m\in \mathbb Z$, $f(2m)\neq 0$ and $m>0$.

Then there is $2m_0\in \mathbb Z$ such that $2m_0$ is the least positive which is contained in $W_1$. We will prove that $W_1=\{ 2m | m\in \mathbb Z, m\geq m_0\}$ and $U_2=\{ 2n+1 | n\in \mathbb Z, n\leq n_0\}$.

If for all $n<0$, $f(2n+1)\neq 0$. By Corollary \ref{cor:01lkmn},
$f(2n+k2m_0+1)=-1,$ for all $k\in \mathbb Z$, $k>0$. We get the contradiction $|U_2|=0$.
Therefore,  there is  the largest negative $2n_0+1\in \mathbb Z$ such that $f(2n_0+1)=0$, that is,  $2n_0+1\in U_2$, $n_0<0$.

\vspace{2mm}First, if there is $m\in \mathbb Z$, $m <0$ such that $2m\in W_1$. Let $2m_0'\in \mathbb Z$ be the largest negative which is contained in $W_1$.  By Corollary \ref{cor:01lkmn},
 $2m_0'+2m_0\in W_1$. Since $2m_0'< 2m_0'+2m_0< 2m_0$, $m_0'=-m_0$. This contradicts to 5) in Corollary \ref{cor:01lkmn}.  Therefore, for all $2m\in W_1$, $m\geq m_0$.

If  there is $m>n_0$ such that $2m+1\notin U_1$, then $f(2m+1)=0$. Let $2m'\in U_2$ be the least one which satisfies $m'>n_0$.
From $f(2m'+2n_0+1)= 0$ and $n_0< 0$, we get $ 2m'+2n_0< 2m'$. Therefore, $2m'+2n_0 < 2n_0$, and $m' < 0$. By the nature of $n_0$, we get the contradiction  $n_0 > m'$.
Therefore, for all $2n+1\in U_1, $ $n>n_0$.

Summarizing above discussion, we have that  for all $m, n\in \mathbb Z$,  $m<m_0$ and $n\leq n_0$, $f(2n+1)=0$ and $f(2m)=0$. Thanks to Corollary \ref{cor:01lkmn}, $f(2n+1)=-1$ for $n>-n_0$ and $f(2n+1)\neq 0$ for $n_0< n< 0$, $f(2m)=0$ for all $0< m< m_0$.

If there is $n\in \mathbb Z$  such that $0< n< -n_0$ and $f(2n+1)=0$. Let $n''\in \mathbb Z$  be the least one satisfying $f(2n+1)=0$, $0< n<-n_0$. Then $f(2n_0+2n''+1)=0$. We get the contradiction  $2n_0+1<2n_0+2n''+1< 2n''+1$.   Therefore, for all
$n\in \mathbb Z$, $0<n<-n_0$, $f(2n+1)\neq 0$.

If there is $m\in \mathbb Z$ such that $-m_0< m< 0$ and $f(2m)=0$.  Let $m''\in \mathbb Z$, $-m_0< m'' <0$ be the largest one satisfying $f(2m'')\neq 0$. Then  $f(2m_0+2m'')\neq 0$. But $2m''< 2m_0+2<m''< 2m_0$.  We get the contradiction.
Therefore,  there exist $m_0, l_0\in \mathbb Z, m_0 > 0$ and $n_0 < 0,$ such that

$W_1=\{ 2m | m\in \mathbb Z, m\geq m_0\}$, $W_2=\{ 2m | m\in \mathbb Z, m < m_0\}$,

$U_1=\{ 2n+1 | n\in \mathbb Z, n> n_0\}$, $U_2=\{ 2n+1 | n\in \mathbb Z, n\leq n_0\}$.

\vspace{2mm}Similar discussion, if there is $m\in \mathbb Z$, $m< 0$ such that $f(2m)\neq 0$, then  there exist $m_0, l_0\in \mathbb Z, m_0 <  0$ and $n_0 > 0,$ such that

$W_1=\{ 2m | m\in \mathbb Z, m\leq m_0\}$, $W_2=\{ 2m | m\in \mathbb Z, m > m_0\}$,

$U_1=\{ 2n+1 | n\in \mathbb Z, n< n_0\}$, $U_2=\{ 2n+1 | n\in \mathbb Z, n\geq n_0\}$.

\vspace{3mm}$(ii).$~~  Now  we discuss the characteristic of the map $f$.

From above discussion, we need to discuss the case that  $f(2m)\neq 0$ if and only if $m\geq m_0 >0$, and  $f(2n+1)\neq 0$ if and only if $n>n_0$, $n_0 < 0$.

From  Corollary \ref{cor:01lkmn}, Eq \eqref{eq:odd} and  Eq \eqref{eq:even},  for all positive $l, k, s\in \mathbb Z$,
$l\neq k$,

\begin{equation}
(f(2m_0+2s)+1)(f(2n_0+2k+1)+1)(f(2n_0+2l+1)+1)=0,
\label{eq:skn0}
\end{equation}

\begin{equation}
(f(2n_0+2s+1)+1)(f(2m_0+2k)+1)(f(2m_0+2l)+1)=0.
\label{eq:skn01}
\end{equation}

Then we have

$\bullet$ the case $f(2m)=-1$ for all $m\in \mathbb Z$, $m\geq m_0$.

\vspace{2mm}  If $f(2n+1)=-1$, for all $n> n_0$,  we obtain case $1)$.

 \vspace{2mm} If there is $n_1\in \mathbb Z$, $n_1>n_0$ and $f(2n_1+1)\neq -1$. By Corollary \ref{cor:01lkmn}, Eq \eqref{eq:odd} and  Eq \eqref{eq:even}, we have  $l_0\geq -3$, and
\\$f(2n+1)\begin{cases}
=-1,&  \text{if} ~ n\geq  -n_0, \\
\neq 0,&  \text{if} ~ n_0< n<0,\\
=-1,&  \text{if} ~ 0< n< -n_0;
\end{cases}$
or
$f(2n+1)\begin{cases}
=-1,&  \text{if} ~ n_0< n<0, \\
\neq 0,&  \text{if} ~ 0< n< -n_0.\\
\end{cases}$
\\
Therefore, if $l_0=-3$, we get 2) and 3). If $l_0=-2$, we obtain case 4).

$\bullet$$\bullet$ If there is unique $m_1\in \mathbb Z$, $m_1\geq m_0$ such that $f(2m_1)\neq 0,-1$. Then by Eq \eqref{eq:skn0}, $f(2n+1)=-1$ for all $n\in \mathbb Z$, $n> n_0$; or there is  unique $n_1\in \mathbb Z$, $n_1>n_0$ such that $f(2n_1+1)\neq 0, -1$, and $f(2n+1)=-1$ for all $n\in \mathbb Z,$ $n>n_0$ and $n\neq n_1$. We obtain
$f(2n+1)=-1$ for $n>n_0$. This is case 5). If there is $n_1 > n_0$ such that $f(2n_1+1)\neq -1$, we obtain case 6).

$\bullet$$\bullet$$\bullet$ If the subset  $S=\{m_k | m_k\in \mathbb Z, m_k\geq m_0, f(2m_k)\neq0, -1, k\in \mathbb Z\}$ is non-empty. By Eq \eqref{eq:skn0} and Eq \eqref{eq:skn01},
$S=\{m_1\}$,  or $S=\{m_1, m_2\}$. we obtain 5) and 6), and 7).

\vspace{5mm} The case (2) ( $m_0<0$ and $n_0>0$) follows from the similar discussion.
 \end{proof}

\noindent
{\bf Acknowledgements. } The first author was supported in part by the Natural
Science Foundation (11371245) and the Natural
Science Foundation of Hebei Province (A2014201006).

\bibliography{}

\end{document}